\documentclass[journal]{IEEEtran}
\usepackage[colorlinks,urlcolor=blue,linkcolor=blue,citecolor=blue]{hyperref}
\usepackage{amsmath,amssymb,amsfonts,bbm,mathrsfs,mathtools,amsthm}
\usepackage{color,array}
\usepackage{tikz}
\usepackage{graphicx}
\usepackage{booktabs}

\newtheorem{theorem}{Theorem}
\newtheorem{lemma}{Lemma}

\newtheorem{definition}{Definition}
\newtheorem{assumption}{Assumption}
\newtheorem{remark}{Remark}

\newtheorem{proposition}{Proposition}
\newcommand{\norm}[1]{{\left \lVert #1 \right \rVert}}
\newcommand{\abs}[1]{{\left \lvert #1 \right \rvert}}
\newcommand{\rank}[1]{{\mathrm{rank} (#1) }}
\definecolor{bl}{rgb}{0,0.45,0.60}
\definecolor{red}{rgb}{0.77,0.01,0.2}
\definecolor{pur}{rgb}{0.5,0.01,0.99}

\begin{document}
\title{Local Observability and Moving Horizon Estimation-based Training of Feedforward \\Neural Networks}
\author{Yi Yang, Victor G. Lopez, and  Matthias A. M\"uller
\thanks{Y. Yang,  V. G. Lopez, and M. A. M\"{u}ller are with the Leibniz University Hannover, Institute of Automatic Control, 30167 Hannover, Germany\\ (e-mail: \{yang, lopez, mueller\}@irt.uni-hannover.de). }
\thanks{This work was supported by the Deutsche Forschungsgemeinschaft (DFG, German Research Foundation) - 535860958.}
}

\maketitle

\begin{abstract}
In this paper, we propose a moving horizon estimation (MHE)-based training method for feedforward neural networks~(FNNs) with rectified linear unit~(ReLU) activation functions to determine their ideal weights from a control-theoretic perspective. This allows for a rigorous theoretical analysis of the trained network. First, we reformulate the FNN as a dynamical system with the weights as states. Then, we investigate the local observability of such a system. For two-layer FNNs with fixed output weights, we derive a sufficient condition under which the observability rank condition holds, ensuring a locally observable state. We also show that multi-layer FNNs in general fail to satisfy the observability rank condition. Based on this analysis,  we develop a persistently exciting (PE) input design method, which renders a state distinguishable from its neighbors.  The resulting local observability provides convergence guarantees for the proposed MHE-based training, where only the projection of the state onto the observable subspace is updated using a fixed-length window of input-output data. The effectiveness of the approach is illustrated via numerical examples.
\end{abstract}

\begin{IEEEkeywords}
Moving horizon estimation, local observability, neural network training, feedforward neural networks
\end{IEEEkeywords}

\section{Introduction}
Feedforward neural networks (FNNs) are a class of fundamental neural network architectures, where information flows in only one direction. By incorporating nonlinear activation functions, FNNs serve as powerful universal function approximators \cite{bonassi2022recurrent,hunt1992neural}. They are widely employed in various network architectures, including convolutional neural networks \cite{krizhevsky2017imagenet} and transformers \cite{vaswani2017attention}. In control engineering, FNNs can be used to approximate control laws, e.g., to generate control signals that emulate model predictive controllers \cite{11159581,9993046,li2022using}.  

In practice, the backpropagation algorithm with gradient descent (GD) technique and its variants  have become the most commonly used methods for neural network training \cite{schmidhuber2015deep}. Despite practical success, theoretical guarantees for convergence to the optimal weights are rarely studied \cite{li2017convergence}. Beyond backpropagation, state estimation-based methods have been explored to train NNs, where the weights of an NN are treated as the states of a nonlinear system.  An extended Kalman filter has been adapted for NN training through updating weights incrementally in \cite{singhal1988training}, improving the convergence speed in practical examples. This approach was later extended to general recurrent neural network learning in \cite{bemporad2022recurrent}, where it demonstrated performance competitive with stochastic gradient descent in a nonlinear identification problem. In addition to the extended Kalman filter, moving horizon estimation (MHE), which is a powerful optimization-based state estimation method with strong theoretical guarantees on convergence \cite{schiller2023lyapunov}, has also been explored for NN training. In \cite{bonassi2022towards}, an online weight adaptation approach based on MHE was developed  for lifelong training of recurrent neural networks. However, a key assumption requiring the system states (i.e., the weights of neural networks) to be observable or at least detectable is not always satisfied, leaving open problems in developing a more thorough theoretical analysis. The optimal weights of an FNN are in general non-unique \cite{kleinberg2018alternative}, implying that global observability of its corresponding dynamical system cannot be guaranteed. In \cite{albertini1993uniqueness}, equivalence of two-layer FNNs up to a finite number of sign flips and node permutations was established under sufficient conditions requiring the activation functions to be odd and linearly independent over all inputs. In \cite{bona2023parameter}, the authors extended the analysis of parameter equivalence to deep ReLU networks and proposed a sufficient condition on the FNN structure and weights, under which any two FNNs are considered equivalent. The lack of global observability motivates us to instead  investigate a weaker form of observability, namely local observability, for NN weights.  

Local observability is a central concept in the analysis of nonlinear systems, describing the ability to distinguish a state from its neighbors using input-output measurements. The seminal work of Hermann and Krener \cite{hermann1977nonlinear} established the theoretical foundations of local observability and local controllability for continuous-time nonlinear systems, and introduced a sufficient condition for local observability, called observability rank condition.  These results were subsequently extended to discrete-time systems in \cite{nijmeijer1982observability} and \cite{albertini1996remarks}, and the effects of sampling were studied in \cite{sontag1984concept}.  Moreover,  a numerical approach based on an empirical observability Gramian was proposed in \cite{5400067}, enabling a quantitative assessment of local observability for nonlinear systems. This framework was later extended to systems with inputs in \cite{7403218}.  Unlike linear systems, observability of nonlinear systems may depend on the applied control inputs, and certain inputs  are required to distinguish between states. Consequently, input selection is also crucial in observability analysis. However, the systematic design of input sequences for  locally observable states remains a challenging problem \cite{7403218}. 

In this paper, we explore FNN training via MHE, wherein the optimal weights of an FNN are treated as the state of its reformulated dynamical system to be estimated. We first investigate local observability of this system, which is crucial for guaranteeing the convergence of the MHE-based FNN training algorithm.  In \cite{vanelli2025local}, local identifiability of bias-free feedforward networks was investigated for all weights except for those lying on a  set of measure zero under analytic activation functions. In contrast, we consider FNNs with the widely used ReLU activation function, which does not satisfy the restrictions imposed in aforementioned works \cite{albertini1993uniqueness,vanelli2025local}.  In particular, for a specific class of two-layer FNNs with fixed output weights, we derive a sufficient condition that guarantees satisfaction of the observability rank condition at a given state, and thus ensures its local observability. We further examine general multi-layer FNNs with ReLU activation functions, demonstrating that they typically fail to satisfy the observability rank condition.

Based on these results, we develop a systematic approach to design PE input sequences for a locally observable state. Since our objective is to estimate the ideal FNN weights via MHE, the ideal state (i.e., the ideal weights of FNNs) is unknown a priori, and its local observability cannot be verified. To address this, we construct a locally observable neighborhood of a given locally observable state by requiring that  all states inside share the same Jacobian matrix of the observability mapping. This construction ensures that every state within this neighborhood is locally observable and distinguishable from each other under the same PE input sequence. 

Leveraging the  local observability results above, we propose an MHE-based training method for FNNs, where the weights are updated using a fixed-length window of input-output data. In particular, under the assumption that the training dataset is PE, we provide a systematic analysis and prove convergence of the state estimates when the MHE-based training is performed iteratively over these windows.

In summary, the contributions of this work are as follows: (i) reformulating FNNs as dynamical systems and deriving a sufficient condition that guarantees local observability of the corresponding states; (ii) developing a systematic approach for PE input design for locally observable states; (iii) establishing an MHE-based FNN training method with theoretical analysis of its convergence. Preliminary results of this work were presented in the conference paper \cite{yang2025local}, which focuses on local observability of FNNs with equal numbers of input and hidden nodes. In this work, we significantly extend the analysis to general FNN architectures and develop an MHE-based FNN training algorithm with convergence guarantees. 

The remainder of this paper is organized as follows. Section~\ref{section-preliminary} introduces  preliminary results on local observability, FNNs, and elementary vector theory.  Section~\ref{section-observability} investigates local observability of general FNNs, and develops a systematic PE input design together with a locally observable neighborhood. Section~\ref{section-MHE} develops an MHE-based training method for FNNs and establishes convergence guarantees. Finally, Section~\ref{section-simulation} validates the effectiveness of the proposed method numerically.

\section{Preliminaries}\label{section-preliminary}
We denote the set of integers in the interval $[a,b]\subset\mathbb{Z}$ by $\mathbb{Z}_{[a,b]}$, and the set of non-negative integers  by $\mathbb{Z}_{\geq 0}$. Denote by $x_{[a,b]}$ the sequence $\{x_a,x_{a+1},\ldots,x_b\}$.  Denote by $\mathbf{1}_n$ the all-ones vector in $\mathbb{R}^n$, and by $\mathbf{1}_{m\times n}$ the all-ones matrix in $\mathbb{R}^{m\times n}$. Let $\otimes$ denote the Kronecker product between two matrices of arbitrary dimensions, and $\circ$ the element-wise product between two matrices of the same dimensions. For a vector $x\in \mathbb{R}^n$, denote by $\mathrm{diag}(x)$ the diagonal matrix with the entries of $x$ on its main diagonal. The identity matrix of dimension $n$ is denoted by $\mathrm{I}_n$. For a matrix $A\in \mathbb{R}^{m\times n}$, denote by $\mathcal{R}(A)$ the subspace spanned by the rows of $A$. Let $A^\dagger$ denote the Moore-Penrose inverse of $A$. For a continuously differentiable function $f(x): \mathbb{R}^{n}\rightarrow\mathbb{R}^{p}$,  the Jacobian matrix at a point $x$ is denoted by $\mathrm{D}f(x)=\frac{\partial f}{\partial x}(x)$. 
\subsection{Local Observability in Nonlinear Systems}
In the following, we introduce and discuss the notion of local observability for a nonlinear system of the form: 
\begin{subequations}\label{2-eq-nonlinearplant}
	\begin{align}
		x_{t+1}&=f(x_{t},u_{t}),\\
		y_{t}&=h(x_{t},u_{t}),\label{2-eq-nonlinearplant-2}
	\end{align}
\end{subequations}where $t\in\mathbb{Z}_{\geq 0}$, $x_t\in \mathbb{R}^n$, $u_t\in \mathbb{R}^m$, and $y_t\in \mathbb{R}^p$.

\begin{definition}
	Define the $k$-observability mapping $\mathscr{H}_k$: $ \mathbb{R}^n\times (\mathbb{R}^m)^k\rightarrow(\mathbb{R}^{p})^k$ at state $x$ by
	\begin{align}
		\mathscr{H}_k(x,u_{[1,k]})=\begin{bmatrix}
			h(x,u_1)\\
			h(f(x,u_1),u_2)\\
			\vdots\\
			h(f(\cdots f(f(x,u_1),u_2),\cdots,u_k))
		\end{bmatrix},
	\end{align} for any input sequence $u_{[1,k]}$ with some $k\geq 1$. 
\end{definition}

\begin{definition}\label{def-local observable}
	System~(\ref{2-eq-nonlinearplant}) is said to be $k$-observable at $x$ if there exists an input sequence $u_{[1,k]}\in (\mathbb{R}^m)^k$ for some $k\geq 1$ and a neighborhood $\mathcal{M}$ of $x$   such that, for any $x^{\prime}\in \mathcal{M}$, $\mathscr{H}_i(x,u_{[1,i]})=\mathscr{H}_i(x^{\prime},u_{[1,i]})$ for $i=1,\ldots,k$, implies $x^{\prime}= x$.
\end{definition}

The above definition of $k$-observability implies that $x$ is distinguishable from all other states in  $\mathcal{M}$. A definition termed strong local observability was given in \cite{nijmeijer1982observability}, where $k=n$ was considered. Here, we consider a more general positive constant~$k$. 

For ease of notation, we write $\mathscr{H}_k(x)$ instead of $\mathscr{H}_k(x,u_{[1,k]})$ to denote the $k$-observability mapping in the following sections. Before we introduce a sufficient condition for local observability, we present the notion of \textit{observability rank condition} \cite{hermann1977nonlinear}.  To this end, given some $x$, we assume that $f$ and $h$ are continuously differentiable on a neighborhood $\mathcal{M}$ of $x$.
System~(\ref{2-eq-nonlinearplant}) is said to satisfy the observability rank condition  at state $x$ if 
\begin{align}\label{2-eq-ORC}
	\rank{\mathrm{D}\mathscr{H}_k(x)}=n,
\end{align}which means the Jacobian matrix of the $k$-observability mapping with respect to state $x$ has full column rank.   With the definitions above, we introduce the following lemma.
\begin{lemma}[\!\!{\cite[Th. 5]{albertini1996remarks}}]\label{2-lem-ORC}
	If system~(\ref{2-eq-nonlinearplant}) satisfies the observability rank condition at $x$, then it is $k$-observable at $x$.
\end{lemma}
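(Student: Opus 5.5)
The plan is to derive the lemma from the local injectivity of a $C^1$ map whose Jacobian has full column rank. Fix the input sequence $u_{[1,k]}$ for which the observability rank condition (\ref{2-eq-ORC}) holds at $x$. Since $f$ and $h$ are continuously differentiable on a neighborhood of $x$, the composition $\mathscr{H}_k(\cdot)=\mathscr{H}_k(\cdot,u_{[1,k]})$ is continuously differentiable near $x$. Strictly, this needs $f(\cdot,u)$ iterates to stay in the region where $f$, $h$ are $C^1$. I will assume the smoothness hypothesis is read as covering $\mathscr{H}_k$ on a neighborhood of $x$, which is what the rank condition implicitly requires anyway. It then suffices to show that $\mathscr{H}_k$ is injective on some neighborhood $\mathcal{M}$ of $x$. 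Indeed, the hypothesis in Definition~\ref{def-local observable} includes equality for $i=k$, i.e.\ $\mathscr{H}_k(x')=\mathscr{H}_k(x)$, and injectivity then forces $x'=x$. The equalities for $i<k$ are simply not needed.

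\textbf{Step 1 (uniform lower bound on the Jacobian).} Let $J(z)=\mathrm{D}\mathscr{H}_k(z)\in\mathbb{R}^{kp\times n}$. Since $\rank{J(x)}=n$, the smallest singular value satisfies $\sigma:=\min_{\|v\|=1}\|J(x)v\|>0$. By continuity of $J$, choose a convex ball $\mathcal{M}=B_r(x)$ on which $\|J(z)-J(x)\|\le\sigma/2$ for all $z\in\mathcal{M}$.

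\textbf{Step 2 (injectivity estimate).} For $z,z'\in\mathcal{M}$, the mean value theorem in integral form, applicable because $\mathcal{M}$ is convex, gives
\begin{align*}
\mathscr{H}_k(z')-\mathscr{H}_k(z)=\int_0^1 J\big(z+s(z'-z)\big)\,\mathrm{d}s\,(z'-z).
\end{align*}
Write the integral as $J(x)+E$ with $\|E\|\le\sigma/2$. Then
\begin{align*}
\|\mathscr{H}_k(z')-\mathscr{H}_k(z)\|\ge \sigma\|z'-z\|-\tfrac{\sigma}{2}\|z'-z\|=\tfrac{\sigma}{2}\|z'-z\|.
\end{align*}
Hence $\mathscr{H}_k(z')=\mathscr{H}_k(z)$ implies $z'=z$. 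Taking $z=x$ gives exactly $k$-observability at $x$. In fact the estimate gives more: every pair of points in $\mathcal{M}$ is distinguishable, which is the property later used for locally observable neighborhoods.

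The only delicate point is Step 1. A full-column-rank condition does not allow the inverse function theorem to be applied directly when $kp>n$, which is why I bound $\|J(x)v\|$ from below and use a perturbation argument rather than appealing to invertibility. An alternative would be to select $n$ linearly independent rows of $J(x)$ and apply the inverse function theorem to the corresponding $n$ components of $\mathscr{H}_k$. I prefer the singular-value estimate, since it is quantitative and gives a convenient radius $r$ for $\mathcal{M}$.
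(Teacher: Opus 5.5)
Your proof is correct and uses the same idea the paper relies on. The paper gives no proof beyond citing Theorem~5 of Albertini and D'Alessandro and remarking that the rank condition makes $\mathscr{H}_k$ injective locally around $x$; your singular-value perturbation bound, combined with the integral mean value theorem on a convex ball, makes that local injectivity explicit. Your caveat is also the right reading of the smoothness hypothesis: the $C^1$ assumption must cover the iterates $f(\cdot,u_i)$, which is automatic in the paper's FNN setting because $f$ is the identity there.
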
 

Lemma \ref{2-lem-ORC} establishes a sufficient condition for $k$-observability, which relies on the first-order derivative of the observability mapping. Specifically, satisfaction of the rank condition in (\ref{2-eq-ORC}) implies that the mapping is injective locally around $x$. This result was first obtained in \cite{hermann1977nonlinear} using a Lie-algebraic  characterization for nonlinear continuous-time systems. It was later adapted to autonomous discrete-time systems in \cite{nijmeijer1982observability}, and it was shown to be \textit{sufficient and necessary} for local observability under the assumption that $\mathrm{D}\mathscr{H}_k(x)$ is constant dimensional in a neighborhood of $x$  in \cite{albertini1996remarks}. 

\begin{definition}
	An input sequence $u_{[1,k]}$ is said to be persistently exciting if the observability rank condition holds at state $x$ under $u_{[1,k]}$.
\end{definition}

 For convenience, throughout the remainder of this work, we use \textit{locally observable} to refer to $k$-observable, and we write \textit{observability mapping} in place of $k$-observability mapping.
\subsection{Feedforward neural networks (FNNs) with ReLU activation functions}\label{section-pre-FNN}
\begin{figure}[htb]
	\centering
	\begin{tikzpicture}[auto, every text node part/.style={align=center},
		neuron/.style={circle, draw, minimum size=0.7cm}, 
		connect/.style={draw, -latex},
		layer/.style={align=center},
		midpoint/.style = {coordinate}
		]
		\node[neuron] (I1) at (0, 0.65) {}; 
		\node at ([xshift=-0.8cm] I1) {$u(1)$};
		\node[neuron] (I2) at (0, -0.65) {}; 
		\node at ([xshift=-0.8cm] I2) {$u(m)$};
		
		\node at (0, 0.1) {\vdots}; 
		
		\node[neuron] (H1) at (1.5, 1.45) {$\sigma$}; 
		\node[neuron] (H2) at (1.5, 0.15) {$\sigma$}; 
		\node at (1.5, -0.4) {\vdots}; 
		\node[neuron] (H3) at (1.5, -1.45) {$\sigma$}; 
		
		\node[neuron] (H21) at (3, 1.45) {$\sigma$}; 
		\node[neuron] (H22) at (3, 0.15) {$\sigma$}; 
		\node at (3, -0.4) {\vdots}; 
		\node[neuron] (H23) at (3, -1.45) {$\sigma$}; 
		
		\node (H31) at (4.5, 1.45) {$\cdots$}; 
		\node (H32) at (4.5, 0.15) {$\cdots$}; 
		\node at (4.5, -0.4) {\vdots}; 
		\node (H33) at (4.5, -1.45) {$\cdots$}; 
		
		\node[neuron] (O1) at (6, 0.8) {};
		\node[neuron] (O2) at (6, -0.8) {};
		\node at (6, 0.25) {\vdots}; 

		\foreach \j in {2,3} {
			\foreach \i in {1,2} {
				\draw[connect] (I\i) -- (H\j);
			}
		}
		\draw[connect] (I1) -- node{$W^1$}(H1);
		\draw[connect] (I2) -- (H1);
		
		\draw[connect] (H1) -- node{$W^2$}(H21);
		\foreach \i in {2,3}{
			\draw [connect](H1) -- (H2\i);}
		
		\foreach \i in {2,3}{
			\foreach \j in {1,2,3}{
				\draw [connect] (H\i) -- (H2\j);
			}
		}
		
		\draw[connect] (H21) -- node{$W^3$}(H31);
		\foreach \i in {2,3}{
			\draw [connect](H21) -- (H3\i);}
		
		\foreach \i in {2,3}{
			\foreach \j in {1,2,3}{
				\draw [connect] (H2\i) -- (H3\j);
			}
		}
		
		\draw[connect] (H31) -- node{$W^L$}(O1);
		\foreach \i in {2,3} {
			\draw[connect] (H3\i) -- (O1);
		}
		\foreach \i in {1,2,3} {
			\draw[connect] (H3\i) -- (O2);
		}
		\node [midpoint, right of=O1, node distance=1cm] (output1) {};
		\draw[connect] (O1) -- node{$y(1)$}(output1);
		\node [midpoint, right of=O2, node distance=1cm] (output2) {};
		\draw[connect] (O2) -- node{$y(p)$}(output2);
		\node[layer, above of=I1, node distance=2cm] (input) {Input \\ layer};
		\node[layer, above of=H1, node distance=1.2cm] (hidden) {Hidden \\ layer 1};
		\node[layer, above of=H21, node distance=1.2cm] (hidden) {Hidden \\ layer 2};
		\node[layer, above of=O1, node distance=1.85cm] (output) {Output \\ layer};
		
		\node [midpoint, above of=H1, node distance=0.7cm] (b1) {};
		\node [midpoint, above of=H2, node distance=0.7cm] (b2) {};
		\node [midpoint, above of=H3, node distance=0.7cm] (b3) {};
		\draw[connect] (b1) -- node{}(H1);
		\draw[connect] (b2) -- node{}(H2);
		\draw[connect] (b3) -- node{}(H3);
		
		\node [midpoint, above of=H21, node distance=0.7cm] (b21) {};
		\node [midpoint, above of=H22, node distance=0.7cm] (b22) {};
		\node [midpoint, above of=H23, node distance=0.7cm] (b23) {};
		\draw[connect] (b21) -- node{}(H21);
		\draw[connect] (b22) -- node{}(H22);
		\draw[connect] (b23) -- node{}(H23);
		
		\node [midpoint, above of=O1, node distance=0.7cm] (b31) {};
		\node [midpoint, above of=O2, node distance=0.7cm] (b32) {};
		\draw[connect] (b31) -- node{}(O1);
		\draw[connect] (b32) -- node{}(O2);
	\end{tikzpicture}
	\caption{Architecture of FNNs}
	\label{fig:FNN}
\end{figure}
Consider $L$-layer fully connected FNNs as shown in Fig.~\ref{fig:FNN}, where $u(i), i\in\mathbb{Z}_{[1,m]}$ refers to the $i$th element of an input vector $u\in\mathbb{R}^{m}$, $y(j), j\in\mathbb{Z}_{[1,p]}$ refers to the $j$th element of an output vector $y\in\mathbb{R}^{p}$, and $\sigma(a)=\max(0,a)$ is the ReLU activation function (when $a$ is a vector, $\sigma$ is applied element-wise). The function $h(W,u)$ between the input $u\in\mathbb{R}^m$ and the output $y\in\mathbb{R}^{p}$ is given by 
\begin{align}\label{2-eq:FNN}
y\!=\!	h(W,u)\!=\!W^L\sigma(\!\cdots(\!\sigma(\!W^{1}u\!+\!b^1)\!+\!\cdots\!)\!+\!b^{L-1})\!+\!b^L,
\end{align}
where $W^{i}\in\mathbb{R}^{n_{i-1}\times n_i}$ and $b^i\in\mathbb{R}^{n_i}$, for $i\in \mathbb{Z}_{[1,L]}$, are weight matrix and bias weights of the $i$th layer, respectively, with $n_i$ denoting the number of nodes in the $i$th layer (the input layer is the 0th layer). For the output layer, the activation function is the identity function.    

\subsection{Orthants and elementary vectors}
In the following, we introduce some preliminary results on orthants and elementary vectors that will be used for the subsequent local observability analysis from a subspace perspective. Define an open orthant in $n$-dimensional space as
\begin{align}\label{def-orthant}
	\mathcal{O}_{s}\coloneqq \left\{x\in\mathbb{R}^{n} \mid x=\sum_{i=1}^{n}\alpha_x(i) s(i) e_i, \alpha_x(i)>0\right\},
\end{align}where $e_i$, $i\in\mathbb{Z}_{[1,n]}$, denote the standard orthonormal basis for $\mathbb{R}^n$, and $s=(s(1),\ldots,s(n))\in\{-1,1\}^{1\times n}$ is the sign vector of $\mathcal{O}_{s}$ representing the sign of each coordinate axis of this orthant. By definition, when $n=1$ and $n=2$, the orthant is called the ray and the quadrant, respectively.

Let the row space of $V\in\mathbb{R}^{m\times n}$ have a nonempty intersection with $r$ distinct open orthants in $\mathbb{R}^{n}$. Denote these orthants by $\mathcal{O}_{s_i}$, for  $i\in \mathbb{Z}_{[1,r]}$, where each $s_i$ is the sign vector of $\mathcal{O}_{s_i}$ as defined in~(\ref{def-orthant}). Define $S_{RV}\coloneqq [s_1^{\top},s_2^{\top}, \ldots,s_r^{\top}]^{\top}\in \{-1,1\}^{r\times n}$, a sign matrix of $\mathcal{R}(V)$. Note that  for any row $s_i$ of $S_{RV}$, there exists a vector $v\in \mathcal{R}(V)$ such that $\mathrm{sign}(v)=s_i$. However, the converse does not hold because the row space of $V$  consists not only of vectors lying in the open orthants that intersect $\mathcal{R}(V)$ but also  on the boundaries between them. 

In the following, we introduce the concept of elementary vector.
\begin{definition}[\!\!\cite{aichmayr2024sagemath}]
	Consider a subspace $\mathcal{R}(V)\subseteq\mathbb{R}^{n}$. For a nonzero vector $v\in \mathcal{R}(V)$, we define its \textit{support}  as: $\mathrm{supp}(v)\coloneqq\{i\in \mathbb{Z}_{[1,n]}\mid v(i)\neq 0\}$, which is the index set of its nonzero entries. A vector $v$ is said to be an \textit{elementary vector} if, for any nonzero vectors $v^{\prime}\in \mathcal{R}(V)$, $\mathrm{supp}(v^{\prime})\subseteq \mathrm{supp}(v)$ implies $\mathrm{supp}(v^{\prime})= \mathrm{supp}(v)$.
\end{definition}

By the above definition, the support of an elementary vector is minimal with respect to inclusion. For example, consider $V=\begin{bmatrix}
	1 &-1& 0\\0&0&1
\end{bmatrix}$, supports of vectors $(1,-1,0)$ and $(0,0,1)$ are $\{1,2\}$ and $\{3\}$. Since the support of any vector in $ \mathcal{R}(V)$ is one of $\{1,2\}$, $\{3\}$, or $\{1,2,3\}$, the supports $\{1,2\}$ and $\{3\}$ are minimal. Consequently, $(1,-1,0)$ and $(0,0,1)$ are elementary vectors of $ \mathcal{R}(V)$.  Moreover, based on results in \cite{rockafellar1969elementary}, the set of elementary vectors of a subspace is finite and  any vector in this subspace can be expressed as a linear combination of them. Procedures for computing elementary vectors of a subspace are provided in \cite{brualdi1995sparse}. A package was developed in \cite{aichmayr2024sagemath} for computing elementary vectors and all $\mathrm{sign}(v)$ for $v\in \mathcal{R}(V)$. 

Next, we develop a proposition regarding the intersection of a subspace and an open orthant, which forms a cone used in the following sections.

\begin{proposition}\label{3-proposition-orthant and subspace reconstruction}
	Consider a matrix $V\in\mathbb{R}^{m\times n}$ with no zero columns, and let $\mathrm{rank}(V)=r\leq m$. Suppose its row space $\mathcal{R}(V)$ intersects an orthant $\mathcal{O}_{s}$. Then, the intersection $\mathcal{R}(V)\cap \mathcal{O}_{s}$ contains $r$ linearly independent vectors. 
\end{proposition}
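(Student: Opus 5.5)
The plan is to use that $\mathcal{R}(V)\cap\mathcal{O}_s$ is a nonempty relatively open subset of the $r$-dimensional subspace $\mathcal{R}(V)$. Such a set cannot lie in any proper subspace, so it must contain a basis.

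First, pick a point $v_0\in\mathcal{R}(V)\cap\mathcal{O}_s$; it exists by hypothesis. Every coordinate of $v_0$ is nonzero, with sign $s(i)$. Set $\delta\coloneqq\min_{i}\abs{v_0(i)}>0$. For any $w\in\mathbb{R}^n$ with $\norm{w}_\infty<\delta$, the vector $v_0+w$ still satisfies $\mathrm{sign}(v_0(i)+w(i))=s(i)$ for all $i$, so $v_0+w\in\mathcal{O}_s$. Thus $\mathcal{O}_s$ contains the open box of radius $\delta$ around $v_0$.

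Next, take any basis $\{w_1,\ldots,w_r\}$ of $\mathcal{R}(V)$; it exists since $\mathrm{rank}(V)=r$. Rescale the basis vectors so that $\norm{w_j}_\infty<\delta$. Consider the vectors $v_0$ and $v_j\coloneqq v_0+w_j$ for $j\in\mathbb{Z}_{[1,r]}$. Each lies in $\mathcal{R}(V)$, because the row space is closed under addition, and each lies in $\mathcal{O}_s$ by the first step. These $r+1$ vectors span $\mathrm{span}\{v_0,w_1,\ldots,w_r\}=\mathcal{R}(V)$, which has dimension $r$. Hence some $r$ of them are linearly independent, which is exactly the claim.

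If one wants the choice to be explicit: write $v_0=\sum_j c_j w_j$. If $1+\sum_j c_j\neq0$, then $v_1,\ldots,v_r$ themselves are independent. This is because their coordinate matrix in the basis $\{w_j\}$ is $\mathrm{I}_r+c\mathbf{1}_r^{\top}$, a rank-one perturbation with determinant $1+\sum_j c_j$. Otherwise, replace one $v_j$ by $v_0$. Alternatively, shrink the rescaling factor to avoid the degenerate case.

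The assumption that $V$ has no zero columns is not actually needed for this argument. Nonemptiness of the intersection with an open orthant already forces every coordinate to be nonzero somewhere in $\mathcal{R}(V)$, so the hypothesis is automatically consistent.

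The only mildly delicate point is the rank-counting step, which produces exactly $r$ independent vectors from the $r+1$ candidates; the explicit construction above settles it. I expect no real obstacle; the core of the proof is the openness of $\mathcal{O}_s$ relative to $\mathcal{R}(V)$.
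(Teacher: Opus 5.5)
Your proof is correct and uses the same key idea as the paper: openness of $\mathcal{O}_s$ lets you perturb a point of $\mathcal{R}(V)\cap\mathcal{O}_s$ by small row-space vectors without leaving the orthant. The paper argues by contradiction, adding $\lambda v_r$ for a single $v_r\in\mathcal{R}(V)$ outside the span of the intersection, whereas you argue directly with a rescaled basis and an explicit radius $\delta$. Your remark that the no-zero-columns hypothesis is implied by the nonempty intersection is also right, and in your optional explicit construction the $v_j$ swapped for $v_0$ must be one with $c_j\neq 0$, although the spanning-set argument already suffices on its own.
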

\begin{proof}
	For contradiction, we assume that the intersection only contains less than $r$ linearly independent vectors. Since $\mathrm{rank}(V)=r$, there exists a vector $v_r\in\mathcal{R}(V)$ that is linearly independent with vectors in the intersection. For any $v^{\prime}$ in the intersection and any $\lambda\neq 0$, the vector $(v^{\prime}+\lambda v_r)\in\mathcal{R}(V)$ is not contained in the intersection, indicating that it is also not in $\mathcal{O}_{s}$. This means that one cannot define $\alpha_{v^{\prime}+\lambda v_r}(i)$ for $i\in \mathbb{Z}_{[1,n]}$ which are all positive (see~(\ref{def-orthant})). However, since $v^{\prime}\in \mathcal{O}_{s}$, there exsit corresponding $\alpha_{v^{\prime}}(i)>0$, for all $i\in\mathbb{Z}_{[1,n]}$, which are all positive. Thus, for sufficiently small $\lambda>0$, the vector $v^{\prime}+\lambda v_r$ is close enough to $v^{\prime}$ that it can be expressed as in~(\ref{def-orthant}) with $\alpha_{v^{\prime}+\lambda v_r}(i)>0$ for all $i\in\mathbb{Z}_{[1,n]}$. This shows that $v^{\prime}+\lambda v_r \in \mathcal{O}_{s}$, which contradicts  our assumption. Hence, the intersection contains $r$ linearly independent vectors.
\end{proof}

\section{Local observability analysis of FNNs}\label{section-observability}
In this section, we reformulate the input-output behavior of FNNs as dynamical systems by treating their weights as states. Using this reformulation, we analyze local observability of several FNN architectures via the observability rank condition. This analysis is essential for establishing the convergence of FNN training when employing MHE in the subsequent section.  We begin in Section~\ref{2-layer FNN} with a specific two-layer FNN, deriving a sufficient condition for the reformulated FNN system to be locally observable using the theory of elementary vectors. This analysis is then extended to general multi-layer FNNs in Section~\ref{section-multi FNN}.

It is well known that, in general, the weights of a neural network that produce a given input-output mapping are not unique \cite{bonassi2022recurrent, hunt1992neural, li2017convergence, albertini1993uniqueness}. This implies that the weights are indistinguishable from input-output data, as illustrated in the following motivating example.

\textit{Example 1}: Consider a two-layer FNN with $1$ input, $3$ hidden nodes, no bias, and $1$ output. The activation functions in hidden nodes are ReLUs and that in the output node is the identity function. Let the ideal weights in the hidden layer be nonzero and denoted by $W = [a, b, c]$, and the output weights are all fixed to $1$.  
In this setting, when trying to estimate the ideal weights there are two possible cases. First, suppose that $a,b$, and $c$
have the same sign. Then, the  corresponding output $y=\sigma(au)+\sigma(bu)+\sigma(cu)$ can only take values of either $0$ or $\abs{(a + b + c)u}$, and therefore the individual weights cannot be determined under any input sequence. The
other case is that only two of the weights $a, b$, and $c$ have the same sign. Then, suppose that $a, b > 0$ and $c < 0$. In this
case, it yields $y = cu$ for any $u < 0$, and $y = (a + b)u$ for $u > 0$. In this case, the weights $a$ and $b$ cannot be uniquely determined. Following similar arguments, it can be observed that any FNN with $1$ input and more than $3$ hidden nodes inevitably has indistinguishable weights.

This example shows that, in general,  the weights of FNNs are not distinguishable, as even a simple FNN architecture fails to satisfy this property. However, we show that in certain circumstances and for specific architectures, the weights of an FNN can be locally distinguishable. Therefore, we first consider a specific class of two-layer FNNs.

\subsection{Analysis of a class of two-layer FNNs}\label{2-layer FNN}
\subsubsection{Problem formulation}
We consider two-layer FNNs, with an $m$-node input layer, an $n$-node hidden layer, and a single  output.\footnote{As specified later, we consider the case where all hidden-to-output weights are fixed to $1$. Consequently, multiple outputs would be redundant, and hence we focus on the single-output architecture in this section. For arbitrary nonzero fixed output weights, each column of the Jacobian matrix in~(\ref{2-eq-H-derivative}) below is scaled by a constant. In this case, considering  multiple output nodes increases the number of rows of the counterpart of (\ref{2-eq-H-derivative}) without changing the number of  columns,  making it easier to have full rank under the same input sequence. } Let $w_{i,j}$ denote the weight from the $i$th node in the input layer to the $j$th node in the hidden layer, and $w_j\coloneqq[w_{1,j}, w_{2,j},\ldots, w_{m,j}]^{\top}, j\in\mathbb{Z}_{[1,n]}$ denote the weights from the whole input layer to the $j$th node of the hidden layer. Then, the weights from the input layer to the hidden layer are denoted by $W\coloneqq[w_1,w_2, \ldots, w_n]\in \mathbb{R}^{m\times n}$, as shown in Fig.~\ref{fig:FNN2}.  We consider the case where all weights from the hidden layer to the output layer are fixed to $1$ and bias weights are omitted, and show that even under this specific architecture, guaranteeing local observability of the reformulated system remains restrictive. The inclusion of output and bias weights as a part of the states to be estimated is addressed in Section~\ref{section-multi FNN}. Following the same setting as in Section~\ref{section-pre-FNN}, the activation functions in the hidden layer and the output layer are ReLUs and the identity function, respectively.
\begin{figure}[!tb]
	\centering
	\begin{tikzpicture}[scale=0.9, 
		auto, 
		every node/.style={transform shape}, 
		every text node part/.style={align=center},
		neuron/.style={circle, draw, minimum size=0.7cm}, 
		connect/.style={draw, -latex},
		layer/.style={align=center},
		midpoint/.style = {coordinate}
		]
		\node[neuron] (I1) at (0, 0.65) {}; 
		\node at ([xshift=-0.8cm] I1) {$u(1)$};
		\node[neuron] (I2) at (0, -0.65) {}; 
		\node at ([xshift=-0.8cm] I2) {$u(m)$};
		
		\node at (0, 0.1) {\vdots}; 
		
		\node[neuron] (H1) at (2, 1.2) {$\sigma$}; 
		\node[neuron] (H2) at (2, 0.15) {$\sigma$}; 
		\node at (2, -0.4) {\vdots}; 
		\node[neuron] (H3) at (2, -1.15) {$\sigma$}; 
		
		\node[neuron] (O1) at (4, 0) {$\scriptstyle \sum$};
		
		\foreach \j in {2,3} {
			\foreach \i in {1,2} {
				\draw[connect] (I\i) -- (H\j);
			}
		}
		\draw[connect] (I1) -- node{$W$}(H1);
		\draw[connect] (I2) -- (H1);
		\foreach \i in {1,2,3} {
			\draw[connect] (H\i) -- (O1);
		}
		\node [midpoint, right of=O1, node distance=1cm] (output) {};
		\draw[connect] (O1) -- node{$y$}(output);
		\node[layer, above of=I1, node distance=1.35cm] (input) {Input \\ layer};
		\node[layer, above of=H1, node distance=0.8cm] (hidden) {Hidden \\ layer};
		\node[layer, above of=O1, node distance=2cm] (output) {Output \\ layer};
	\end{tikzpicture}
	\caption{A specific class of two-layer FNNs}
	\label{fig:FNN2}
\end{figure}

Based on the described FNN setting, the input-output mapping is given by
\begin{align}\label{2-eq:FNN-2}
	y=h(W,u)=\sum_{j=1}^{n}\sigma(w_j^{\top}u).
\end{align} 

In the following, we denote the concatenation of the weights $w_j$, $j\in \mathbb{Z}_{[1,n]}$, by $w\coloneqq (w_1^{\top},w_2^{\top},\ldots,w_n^{\top})^{\top}\in\mathbb{R}^{mn}$. We define the indicator function $\chi(a)$ as
\begin{align}
	\chi(a) =
	\begin{cases}
		1, & a > 0, \\
		0, & a \leq 0.
	\end{cases}
\end{align}
For vector or matrix arguments, the indicator function is applied element-wise. Then, for an input $u_i\in \mathbb{R}^m$, and given that the activation functions $\sigma$ are ReLUs, the partial derivative of the output with respect to the weights is\footnote{ Note that these derivatives are well defined if the arguments of $\sigma$ in~(\ref{2-eq:FNN-2}) are nonzero. The input sequences constructed later in Theorems~\ref{th-local observable} and~\ref{th-PEU} are such that this property holds and hence the derivatives are well defined.\label{footnote-differentiable}} 
\begin{align}\label{2-eq-derivative}
	\frac{\partial y}{\partial w}&=\left[ \frac{\partial h}{\partial w_1}, \ldots ,\frac{\partial h}{\partial w_n}\right]=\left[ u_i^{\top} \chi_{i,1},  \ldots ,u_i^{\top} \chi_{i,n}\right],
\end{align}where 
\begin{align}\label{2-eq-indicator}
	\chi_{i,j}\coloneqq \chi(w_j^{\top}u_i),~j\in \mathbb{Z}_{[1,n]}.
\end{align}

To formally analyze the observability of FNNs, we first represent them as dynamical systems. Specifically, the input-output mapping in~(\ref{2-eq:FNN-2}) admits the representation
\begin{subequations}\label{2-eq-FNN problem}
	\begin{align}
		w_{t+1}&=w_{t},\label{2-eq-FNN state}\\
		y_t&=h(w_{t},u_{t}),\label{2-eq-FNN output}
	\end{align}
\end{subequations}with $t\in \mathbb{Z}_{\geq 0}$, where  $u_t\in \mathbb{R}^{m}$, $w_t\in \mathbb{R}^{nm}$, and $y_t\in \mathbb{R}$ are  the input, the weights (also the states), and the output at time step~$t$.  Note that since the ideal weights of an FNN are constant, the corresponding dynamical system representation (\ref{2-eq-FNN state}) is static.  Variants of the model in~(\ref{2-eq-FNN problem}) have been employed to train neural networks through state estimation methods \cite{bemporad2022recurrent, bonassi2022towards, singhal1988training}. With a slight abuse of notation from~(\ref{2-eq:FNN-2}), we denote the FNN input-output mapping by $h(w_{t},u_{t})$ in subsequent sections, where the state $w_t$ is the concatenation of the columns of $W$ at time step $t$.

Since the output~(\ref{2-eq-FNN output}) is a function solely of the current state and input, independent of any past inputs, the observability mapping of the FNN at state $w$ under the input sequence $u_{[1,N]}$ for some positive integer $N$ is given by 
\begin{align}
	\mathscr{H}_N(w)=[h(w,u_1),h(w,u_2), \ldots, h(w,u_N)]^{\top}.
\end{align}Then, combining with (\ref{2-eq-derivative}) and (\ref{2-eq-indicator}), we obtain the Jacobian matrix of $\mathscr{H}_N(w)$, given by
\begin{align}
	\mathrm{D}\mathscr{H}_{N}(w)&=\begin{bmatrix}
		\frac{\partial h}{\partial w_1}(w,u_1)  &\!\!\cdots\!\! &\frac{\partial h}{\partial w_n}(w,u_1)\\
		\frac{\partial h}{\partial w_1}(w,u_2)  &\!\!\cdots \!\!&\frac{\partial h}{\partial w_n}(w,u_2)\\
		\vdots &\!\!\ddots\!\!&\vdots\\
		\frac{\partial h}{\partial w_1}(w,u_{N})  &\!\!\cdots\!\! &\frac{\partial h}{\partial w_n}(w,u_{N})\\
	\end{bmatrix}\notag\\
	&=\begin{bmatrix}\label{2-eq-H-derivative}
		u_1^{\top} \chi_{1,1}  & \cdots &u_1^{\top} \chi_{1,n}\\
		u_2^{\top} \chi_{2,1}  & \cdots &u_2^{\top} \chi_{2,n}\\
		\vdots&\ddots&\vdots\\
		u_{N}^{\top} \chi_{N,1} & \cdots &u_{N}^{\top} \chi_{N,n}\\
	\end{bmatrix}.
\end{align}

Since $\mathrm{D}\mathscr{H}_N(w)\in \mathbb{R}^{N\times mn}$, satisfaction of the observability rank condition in~(\ref{2-eq-ORC}) requires $N\geq mn$. This Jacobian matrix can be further factorized into the product of a matrix depending on the input sequence and a matrix of indicator functions, as shown below, 
\begin{align}\label{2-eq-dH-factorization}
	\mathrm{D}\mathscr{H}_N(w)=\underbrace{
		\begin{bsmallmatrix}
			u_1^{\top}&~&~&~\\
			~&u_2^{\top}&~&~\\
			~&~&\ddots&~\\
			~&~&~&u_N^{\top}
	\end{bsmallmatrix}}_{T_u}\left(\underbrace{\begin{bsmallmatrix}
			\chi_{1,1} &  \chi_{1,2} & \cdots & \chi_{1,n}\\
			\chi_{2,1} &  \chi_{2,2} & \cdots & \chi_{2,n}\\
			\vdots&\vdots&\ddots&\vdots\\
			\chi_{N,1} &  \chi_{N,2} & \cdots & \chi_{N,n}
	\end{bsmallmatrix}}_{T_{\chi}}\otimes\mathrm{I}_{m}\right),
\end{align}where $T_u\in\mathbb{R}^{N\times Nm}$, $T_{\chi}\in\mathbb{R}^{N\times n}$. Denoting $U\coloneqq[u_1,\ldots,u_N]^{\top}\in \mathbb{R}^{N\times m}$, it follows that $\chi(UW)=[\chi(u_1^{\top}W)^{\top},\cdots,\chi(u_N^{\top}W)^{\top}]^{\top}$. The indicator function corresponding to $u_i^{\top}W$  is given by $\chi(u_i^{\top}W)=\left[ \chi(u_i^{\top}w_1), \ldots ,\chi(u_i^{\top}w_n)\right]$. Recalling the definition of $\chi_{i,j}$ in~(\ref{2-eq-indicator}), and that $\chi(w_j^{\top}u_i)=\chi(u_i^{\top}w_j)$, it yields $\chi(u_i^{\top}W)=\left[ \chi_{i,1}, \ldots ,\chi_{i,n}\right]$. Therefore, we have 
\begin{align}\label{def-Tchi}
	 T_{\chi}=\chi(UW).
\end{align}

The factorization in~(\ref{2-eq-dH-factorization}) enables us to derive conditions for $\rank{\mathrm{D}\mathscr{H}_N(w)}=mn$, i.e., satisfaction of the observability rank condition in~(\ref{2-eq-ORC}). For instance, a necessary condition for $\mathrm{D}\mathscr{H}_N(w)$ to have full column rank is that $T_{\chi}$ has full column rank.

\subsubsection{Sufficient condition for local observability}
Beyond the FNN architecture itself, observability of a particular configuration also depends on the network weights, which determine whether there exists an input sequence that satisfies the observability rank condition (\ref{2-eq-ORC}). In what follows, we demonstrate that specific FNN configurations are  locally observable and  provide a sufficient condition for local observability.

\begin{theorem}\label{th-local observable}
	Consider the FNN in~(\ref{2-eq-FNN problem}) with the weight matrix $W\in\mathbb{R}^{m\times n}$, where $m\leq n$. Suppose $W$ has no zero columns and its row space $\mathcal{R}(W)$ intersects $\ell$ orthants $\mathcal{O}_{s_i}$ in $\mathbb{R}^{n}$ for $i\in \mathbb{Z}_{[1,\ell]}$, with corresponding sign vectors $s_i\in\{-1,1\}^{1\times n}$. Let the sign matrix of $\mathcal{R}(W)$ be $S_{RW}\coloneqq [s_1^{\top},s_2^{\top},\ldots,s_{\ell}^{\top}]^{\top}\in \mathbb{R}^{\ell\times n}$.  Then, the state $w$ is locally observable if 
	\begin{align}\label{3-th-iff-eq}
		\mathrm{rank}(\chi(S_{RW}))=n.
	\end{align}
\end{theorem}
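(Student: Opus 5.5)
Because of Lemma~\ref{2-lem-ORC}, it is enough to build an input sequence $u_{[1,N]}$ with $N=mn$ under which the observability rank condition $\mathrm{rank}(\mathrm{D}\mathscr{H}_N(w))=mn$ holds, and under which no argument $w_j^\top u_i$ is zero, so that footnote~\ref{footnote-differentiable} applies. The tool is the factorization~(\ref{2-eq-dH-factorization}). Row $i$ of $\mathrm{D}\mathscr{H}_N(w)$ is $\chi(u_i^\top W)\otimes u_i^\top$, and its activation pattern is $\chi(u_i^\top W)$. The vector $u_i^\top W$ ranges over $\mathcal{R}(W)$. So if we pick $u_i$ with $u_i^\top W\in\mathcal{O}_{s_k}$, the pattern of row $i$ is exactly the row $\chi(s_k)$ of $\chi(S_{RW})$, and all entries of $u_i^\top W$ are nonzero, which gives differentiability.

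\emph{Step 1.} By~(\ref{3-th-iff-eq}), choose $n$ rows of $S_{RW}$, say $s_{k_1},\dots,s_{k_n}$, such that the binary vectors $\chi(s_{k_1}),\dots,\chi(s_{k_n})$ are linearly independent.

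\emph{Step 2.} For each chosen orthant $\mathcal{O}_{s_{k_q}}$, pick $m$ inputs $u_{q,1},\dots,u_{q,m}$ that are linearly independent and satisfy $u_{q,\ell}^\top W\in\mathcal{O}_{s_{k_q}}$. Proposition~\ref{3-proposition-orthant and subspace reconstruction} supplies the candidate vectors. Since $W$ has no zero columns, the proposition applied to $W$ gives $r=\mathrm{rank}(W)$ linearly independent vectors $v_1,\dots,v_r$ in $\mathcal{R}(W)\cap\mathcal{O}_{s_{k_q}}$, written $v_\ell=u_{q,\ell}^\top W$.
\begin{itemize}
\item If $\mathrm{rank}(W)=m$, then $W^\top$ is injective, so the preimages $u_{q,\ell}$ are linearly independent.
\item If $\mathrm{rank}(W)<m$, the directions in $\ker W^\top$ cannot be excited. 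The rank condition then fails by the argument below, so I would first establish that this case is excluded. One route: if $\ker W^\top$ contains $z\neq0$, then $\mathrm{D}\mathscr{H}_N(w)$ annihilates the direction $w_j\mapsto w_j+\epsilon z$ taken uniformly over $j$. The cleanest fix is to read the hypothesis (with $m\le n$) as including full row rank, or to show that~(\ref{3-th-iff-eq}) forces it.
\end{itemize}
This is the step I expect to be the main obstacle.

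\emph{Step 3.} Stack the $mn$ inputs, grouped by $q$. Suppose a vector $\theta=(\theta_1,\dots,\theta_n)$, with $\theta_j\in\mathbb{R}^m$, satisfies $\mathrm{D}\mathscr{H}_N(w)\theta=0$. Writing $c_j^{(q)}=\chi(s_{k_q})(j)$, each row gives
\[
u_{q,\ell}^\top\Big(\sum_j c_j^{(q)}\theta_j\Big)=0,\qquad \ell=1,\dots,m .
\]
Since the $u_{q,\ell}$ span $\mathbb{R}^m$, this yields $\sum_j c^{(q)}_j\theta_j=0$ for every $q$. In matrix form this is $(C\otimes\mathrm{I}_m)\theta=0$, where $C=[\chi(s_{k_1});\dots;\chi(s_{k_n})]$ is invertible by Step 1. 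Hence $\theta=0$, and $\mathrm{D}\mathscr{H}_N(w)$ has full column rank $mn$. Lemma~\ref{2-lem-ORC} then gives local observability.

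Besides the rank issue in Step 2, one remaining point needs a check. The ReLU map is smooth on the open set where every $w_j^\top u_i\neq0$, and strictness of the orthant membership gives an open neighborhood on which this holds. So the continuous differentiability assumed before Lemma~\ref{2-lem-ORC} is satisfied.
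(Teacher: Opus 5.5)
Your Steps 1 and 3 follow the paper's argument. You choose $n$ rows of $\chi(S_{RW})$ that are linearly independent, excite each corresponding orthant with $m$ linearly independent inputs, and use the block structure of $\mathrm{D}\mathscr{H}_N(w)$. Your reduction of $\mathrm{D}\mathscr{H}_N(w)\theta=0$ to $(C\otimes\mathrm{I}_m)\theta=0$ is a tidier version of the paper's $\Gamma T_k^{\top}=0$ computation, and your differentiability check is fine.

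The gap is Step 2 when $\mathrm{rank}(W)<m$, and there your diagnosis is wrong. For $z\in\ker W^{\top}$, row $i$ of $\mathrm{D}\mathscr{H}_N(w)$ applied to $(z,\ldots,z)$ equals $(\sum_j\chi_{i,j})\,u_i^{\top}z$. This vanishes for every row only if every $u_i$ is orthogonal to $z$, that is, only if you take the preimages $u_{q,\ell}$ in the column space of $W$. Nothing forces that choice. Your two remedies also fail. Reading full row rank into the hypothesis proves only a special case, and~(\ref{3-th-iff-eq}) does not imply full row rank. For instance, take $m=n=2$, $w_1=(1,0)^{\top}$, $w_2=(-1,0)^{\top}$. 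Then $\mathcal{R}(W)=\mathrm{span}\{(1,-1)\}$, $\chi(S_{RW})=\mathrm{I}_2$ and $\mathrm{rank}(W)=1$. Yet the inputs $(1,0),(1,1),(-1,0),(-1,1)$ give a block-triangular $4\times 4$ Jacobian with diagonal $(1,1,-1,1)$, hence of full rank.

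The missing idea is one line. The set $\{u\in\mathbb{R}^m : W^{\top}u\in\mathcal{O}_{s}\}$ is the preimage of an open set under a continuous linear map, hence open. It is nonempty because $\mathcal{R}(W)$ meets $\mathcal{O}_{s}$. A nonempty open subset of $\mathbb{R}^m$ contains $m$ linearly independent vectors. This supplies the inputs of Step 2 for every value of $\mathrm{rank}(W)$, makes Proposition~\ref{3-proposition-orthant and subspace reconstruction} unnecessary, and your Step 3 then goes through verbatim.

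The paper instead handles the rank-deficient case by a ``without loss of generality'' reduction to full row rank (removing redundant inputs), and then builds the inputs as $U_k=C_kW^{\dagger}$. That reduction replaces $w$ by a lower-dimensional state. The open-set argument is therefore the more direct way to cover the theorem as stated.
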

\begin{proof}
Without loss of generality, we prove this theorem for the case where $W$ has full row rank. Notice that if $W$ was row-rank deficient, there would be redundant inputs to the FNN, and an equivalent network could be obtained with reduced inputs and a $W$ of full row rank.  In the following, we show that $w$ is locally observable by establishing the existence of an input sequence $u_{[1,N]}$ with $N=mn$ such that the observability rank condition in~(\ref{2-eq-ORC}) is satisfied at state $w$.

	From (\ref{3-th-iff-eq}), it follows that $\mathcal{R}(W)$ intersects at least $n$ orthants in $\mathbb{R}^{n}$, implying $\ell\geq n$. Let $T_k\in \mathbb{R}^{1\times n}$ for $k\in \mathbb{Z}_{[1,n]}$ denote $n$ linearly independent rows of $\chi(S_{RW})$. Each $T_k$ is the indicator vector of a distinct intersected orthant's sign vector. By Proposition~\ref{3-proposition-orthant and subspace reconstruction} and the fact that $\mathrm{rank}(W)=m$, each intersection of $\mathcal{R}(W)$ with an orthant contains $m$ linearly independent vectors.   Therefore,  there exists a full row rank matrix $C_k\in \mathbb{R}^{m\times n}$ formed by $m$ such linearly independent vectors from the cone $\mathcal{R}(W) \cap \mathcal{O}_{s_{i(k)}}$, where $i(k)\in\mathbb{Z}_{[1,\ell]}$ and $T_k=\chi(s_{i(k)})$. Since all these vectors lie in the same orthant $\mathcal{O}_{s_{i(k)}}$, they share the sign vector $s_{i(k)}$ and hence the same indicator vector $T_k$, yielding $\chi(C_k)=T_k\otimes \mathbf{1}_m$. 
	
	Since $\mathrm{rank}(C_k)=m$ and $\mathcal{R}(C_k)=\mathcal{R}(W)$, it follows that the columns of $C_k^{\top}$ are linear combinations of the columns of $W^{\top}$. Thus, $C_k=T_{C_k}W$ for some nonsingular $T_{C_k}\in \mathbb{R}^{m\times m}$.  We now construct the desired persistently exciting input sequence $u_{[1,N]}$ with $N=nm$. To this end,  define the input blocks for $k\in \mathbb{Z}_{[1,n]}$ as
	\begin{align}
		 	U_k\!\coloneqq\!\begin{bmatrix}
		 	u_{(k-1)m+1}, u_{(k-1)m+2}, \cdots, u_{km}
		 \end{bmatrix}^{\top},\label{3-eq-Hk}
	\end{align} and set $U_k=C_kW^{\dagger}$, for $k\in \mathbb{Z}_{[1,n]}$. The block matrix $U_k$ is nonsingular because $U_k=T_{C_k}WW^{\dagger}=T_{C_k}$, where $W^{\dagger}=W^{\top}(WW^{\top})^{-1}$ is the Moore-Penrose inverse due to the fact that $W$ has full row rank.

	We now prove the theorem by showing that  the observability rank condition $\textrm{rank}(\mathrm{D}\mathscr{H}_N(w))=nm$ holds under the constructed input sequence $u_{[1,N]}$. To this end, we assume for contradiction that $\textrm{rank}(\mathrm{D}\mathscr{H}_N(w))<nm$. Then, there exist some coefficients $c_{1}, c_{2},\ldots, c_{mn}$, not all zero, such that
	\begin{align}
		\begin{bmatrix}
			u_1^{\top} \chi_{1,1} & u_1^{\top} \chi_{1,2} & \cdots &u_1^{\top} \chi_{1,n}\\
			u_2^{\top} \chi_{2,1} & u_2^{\top} \chi_{2,2} & \cdots &u_2^{\top} \chi_{2,n}\\
			\vdots&\vdots&\ddots&\vdots\\
			u_{N}^{\top} \chi_{N,1} & u_{N}^{\top} \chi_{N,2} & \cdots &u_{N}^{\top} \chi_{N,n}\\
		\end{bmatrix}\begin{bmatrix}
			c_1\\c_2\\ \vdots \\ c_{mn}
		\end{bmatrix}=\mathbf{0}.\label{4-eq-th-1}
	\end{align}By taking $m$-row subsets of~(\ref{4-eq-th-1}), we have
	\begin{align}
		\begin{bsmallmatrix}\label{4-eq-th-subset rows}
			u_{(k-1)m+1}^{\top} \chi_{(k-1)m+1,1}  & \cdots &u_{(k-1)m+1}^{\top} \chi_{(k-1)m+1,n}\\
			u_{(k-1)m+2}^{\top} \chi_{(k-1)m+2,1}  & \cdots &u_{(k-1)m+2}^{\top} \chi_{(k-1)m+2,n}\\
			\vdots&\ddots&\vdots\\
			u_{km}^{\top} \chi_{km,1} &  \cdots &u_{km}^{\top} \chi_{km,n}\\
		\end{bsmallmatrix}\begin{bsmallmatrix}
			c_1\\c_2\\ \vdots \\ c_{mn}
		\end{bsmallmatrix}=\mathbf{0},
	\end{align}where $ k\in\mathbb{Z}_{[1,n]}$. 
	
	Recalling the definition of $T_{\chi}$ in (\ref{def-Tchi}) and that $U_k=C_kW^{\dagger}$,  it is clear that $U_kW=T_{C_k}W=C_k$ and thus $T_{\chi}=\chi(UW)=\chi(C)$, where $C\coloneqq [C_1^{\top},\cdots, C_n^{\top}]^{\top}$. Combining this with $\chi(C_k)=T_k\otimes \mathbf{1}_m$, we obtain
	\begin{align}\label{4-eq-th-subset indicator}
		\begin{bmatrix}
			\chi_{(k-1)m+1,1} &   \cdots & \chi_{(k-1)m+1,n}\\
			\chi_{(k-1)m+2,1} &  \cdots & \chi_{(k-1)m+2,n}\\
			\vdots&\ddots&\vdots\\
			\chi_{km,1} &   \cdots & \chi_{km,n}
		\end{bmatrix}=T_{k}\otimes\mathbf{1}_m,
	\end{align}where $ k\in\mathbb{Z}_{[1,n]}$. Since  $T_{k}\in\mathbb{R}^{1\times n}$ is an indicator vector with entries in $\{0,1\}$, each column of the matrix on the left-hand side of~(\ref{4-eq-th-subset indicator}) is either an all-ones vector or a zero vector, with at least one column being an all-ones vector. As a result, the leftmost matrix in~(\ref{4-eq-th-subset rows}) consists of blocks, each of which equals either $U_k$ or $\mathbf{0}_{m\times n}$.
	
By assumption, (\ref{4-eq-th-1}) holds with coefficients $c_1,\ldots,c_{mn}$ not all zero, implying that the columns of the leftmost matrix of (\ref{4-eq-th-1}), 
and hence also of the left-most matrix in~(\ref{4-eq-th-subset rows}), are linearly independent. Since $U_k$ is nonsingular, this can only be the case if, for every every $j\in \mathbb{Z}_{[1,m]}$, the columns with indices $j,m+j,\ldots,(n-1)m+j$ are linearly dependent. That is, we have
	\begin{align}
		\sum_{l=1}^{n}u_{(k-1)m+i}(j)\chi_{(k-1)m+i,l}c_{(l-1)m+j}=0,
	\end{align}where  $k\in \mathbb{Z}_{[1,n]}$, $i\in\mathbb{Z}_{[1,m]}$, and  $j\in\mathbb{Z}_{[1,m]}$. This is equivalent to
	\begin{align}\label{4-eq-coefficient0}
		\mathrm{diag}(u_{(k-1)m+i})\underbrace{\begin{bsmallmatrix}
				c_1&c_{m\!+\!1}&\cdots&c_{m(n\!-\!1)+1}\\
				c_2&c_{m\!+\!2}&\cdots&c_{m(n\!-\!1)+2}\\
				\vdots&\vdots&\ddots&\vdots\\
				c_{m}&c_{2m}&\cdots&c_{nm}\\
		\end{bsmallmatrix}}_{\Gamma}T_{k}^{\top}\!=\!\mathbf{0},
	\end{align}
	for all $i\in \mathbb{Z}_{[1,m]}$ and $k\in\mathbb{Z}_{[1,n]}$, which can be represented as 
	\begin{align}\label{22}
		\mathrm{diag}(\Gamma T_{k}^{\top})u_{(k-1)m+i}=\mathbf{0},
	\end{align}for all $i\in \mathbb{Z}_{[1,m]}$ and $k\in\mathbb{Z}_{[1,n]}$. Collecting these equations for all $i$ yields
	\begin{align}
		\mathrm{diag}(\Gamma T_{k}^{\top}) U_k^{\top}=\mathbf{0}, k\in\mathbb{Z}_{[1,n]}.
	\end{align}
	Since $U_k$ is nonsingular, this results in
	\begin{align}\label{4-eq-coefficient1}
		\Gamma T_{k}^{\top}=\mathbf{0}, k\in\mathbb{Z}_{[1,n]}.
	\end{align}Then, combining~(\ref{4-eq-coefficient1}) for all $k$ yields
	\begin{align}\label{4-eq-coefficient2}
		\Gamma\begin{bmatrix}
			T_1^{\top},T_2^{\top},\ldots,T_n^{\top}
		\end{bmatrix}=\mathbf{0}.
	\end{align}Recalling that the vectors $T_k$ for $k\in\mathbb{Z}_{[1,n]}$ are linearly independent, equation (\ref{4-eq-coefficient2}) implies that all coefficients $c_{1},\cdots, c_{mn}$ must be zero. However, this contradicts our initial assumption in~(\ref{4-eq-th-1}) that not all $c_i$ are zero. Therefore, $\textrm{rank}(\mathrm{D}\mathscr{H}_N(w))=nm$, which means the observability rank condition is satisfied at state $w$ for the constructed input sequence, and the state $w$ is locally observable.
\end{proof}

As mentioned in the proof above, the input-output mapping of an FNN with a row-rank deficient weight matrix can always be presented by  an equivalent FNN whose weight matrix has lower dimension and full row rank. Therefore, also in the subsequent analysis, we restrict our attention to the case where $W$ has full row rank.

In our preliminary work \cite{yang2025local}, we considered FNNs with equal numbers of inputs and hidden nodes. For this architecture, local observability was established under the condition that $W$ is nonsingular. This condition implies that the row space $\mathcal{R}(W)$, which is of dimension $n$, intersects every orthant in $\mathbb{R}^n$. Therefore, the corresponding indicator matrix $\chi(S_{RW})$ satisfies condition~(\ref{3-th-iff-eq}), rendering that state $w$ is locally observable by Theorem~\ref{th-local observable}. Thus, Theorem~\ref{th-local observable} covers the results of \cite{yang2025local} as a special case.

\begin{remark}
To verify condition (\ref{3-th-iff-eq})  for a given weight matrix $W$, the sign vectors $s_i$ for $i\in\mathbb{Z}_{[1,\ell]}$ can be obtained using the package in~\cite{aichmayr2024sagemath} (see also the discussion above Proposition~\ref{3-proposition-orthant and subspace reconstruction}).
\end{remark}

\begin{remark}\label{rem-lower bound}
	Theorem~\ref{th-local observable} establishes a sufficient condition for identifying a locally observable state based on the observability rank condition. The minimal input sequence length for this condition to hold is $N=mn$, since the matrix $\mathrm{D}\mathscr{H}_N(w)$ has $mn$ columns. Therefore, for any locally observable state satisfying the condition in Theorem~\ref{th-local observable}, a minimal input sequence of length $mn$ is sufficient for excitation as proved above, even though longer persistently exciting input sequences (with $N>mn$) are possible.
\end{remark}
\begin{remark}
	Theorem~\ref{th-local observable} addresses the case $m\leq n$. This covers the typical FNN  architecture in practice, where the hidden layer has at least as many nodes as the input layer \cite{cybenko1989approximation}. If instead $m>n$, the results of Theorem~\ref{th-local observable} still hold. A similar proof can be carried out by only adapting the construction of input blocks in~(\ref{3-eq-Hk}), where a full row rank $W$ is considered. In particular, without this requirement, a nonsingular $U_k$ satisfying $U_kW=C_k$  exists as well, since $C_k$ is constructed such that $\mathcal{R}(C_k)=\mathcal{R}(W)$.  Moreover, in this case the architecture can be reduced to an equivalent network with $m \leq n$, which amounts to removing redundant inputs.  For these reasons, we only focus on the case $m\leq n$ in the following sections. 
\end{remark}
\subsubsection{Persistently exciting input design}
In the proof of Theorem~\ref{th-local observable}, it was shown how to design persistently exciting input sequences for a given locally observable state~$w$ in the FNN configuration~(\ref{2-eq:FNN-2}). This method can be used for design purposes and is summarized in the following result.

\begin{theorem}\label{th-PEU}
	Consider the reformulated FNN dynamical system in~(\ref{2-eq-FNN problem}). Assume a full row rank weight matrix $W$ satisfies condition~(\ref{3-th-iff-eq}) in Theorem~\ref{th-local observable}. For any $C=[C_1^{\top}, C_2^{\top},\ldots, C_n^{\top}]^{\top}$, $C_k\in\mathbb{R}^{m\times n}$, $k\in\mathbb{Z}_{[1,n]}$, satisfying
	\begin{align}
		&\mathrm{rank}(C_k)=m, k\in\mathbb{Z}_{[1,n]},\label{4-th-c1}\\
		&C_k=C_kW^{\dagger}W, k\in\mathbb{Z}_{[1,n]}, \label{4-th-c2}\\
		&\chi(C)=T\otimes \mathbf{1}_{m},\label{4-th-c3}
	\end{align}where $T\in\mathbb{R}^{n\times n}$ is a nonsingular matrix with binary entries (each element is either $1$ or $0$), the input sequence $u_{[1,N]}$ derived by
	\begin{align}
		U\coloneqq [u_1, u_2,\ldots,u_N]^{\top}= CW^{\dagger},\label{4-th-c4}
	\end{align} is persistently exciting.
\end{theorem}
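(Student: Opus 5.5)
The plan is to reduce Theorem~\ref{th-PEU} to the second half of the proof of Theorem~\ref{th-local observable}: the rank argument there used only three properties of the input blocks. These are nonsingularity of each $U_k$, the identity $U_kW=C_k$, and the block structure of $T_\chi$ given in~(\ref{4-eq-th-subset indicator}) with linearly independent $T_k$. I would verify these three properties from~(\ref{4-th-c1})--(\ref{4-th-c4}) and then repeat the linear-dependence argument.

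\textbf{Step 1: the blocks.} Partition $U=CW^\dagger$ into $U_k=C_kW^\dagger\in\mathbb{R}^{m\times m}$ as in~(\ref{3-eq-Hk}). By~(\ref{4-th-c2}), $U_kW=C_kW^\dagger W=C_k$. Hence $m=\mathrm{rank}(C_k)\le\mathrm{rank}(U_k)\le m$ by~(\ref{4-th-c1}), so each $U_k$ is nonsingular. Consequently $UW=C$, and by~(\ref{def-Tchi}) together with~(\ref{4-th-c3}), $T_\chi=\chi(UW)=\chi(C)=T\otimes\mathbf{1}_m$. Writing $T_k$ for the $k$th row of $T$, the $k$th block of $m$ rows of $T_\chi$ equals $T_k\otimes\mathbf{1}_m$. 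This is exactly~(\ref{4-eq-th-subset indicator}), and nonsingularity of $T$ gives linear independence of $T_1,\dots,T_n$.

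A subtlety to address is the differentiability assumption in footnote~\ref{footnote-differentiable}: the entries $w_j^\top u_i$ must be nonzero. Condition~(\ref{4-th-c3}) only records $\chi$, so a zero entry of $C$ would be indistinguishable from a negative one. I would either argue that the rows of $C$ are taken from open orthants, as in the construction of Theorem~\ref{th-local observable}, so that no entry of $C=UW$ vanishes, or note that the Jacobian formula~(\ref{2-eq-H-derivative}) is then understood with the convention $\chi(0)=0$. I regard this as the main point needing care rather than a genuine obstacle.

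\textbf{Step 2: rank.} Suppose $\mathrm{D}\mathscr{H}_N(w)c=0$ with $c\in\mathbb{R}^{mn}$, and arrange $c$ into $\Gamma\in\mathbb{R}^{m\times n}$ as in~(\ref{4-eq-coefficient0}). For row $(k-1)m+i$, the block structure gives $u_{(k-1)m+i}^\top\sum_l T_k(l)c_{l}$, where $c_l$ denotes the $l$th column of $\Gamma$. This equals $u_{(k-1)m+i}^\top\Gamma T_k^\top=0$.

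Stacking over $i$ yields $U_k\Gamma T_k^\top=0$. This derivation is slightly more direct than the diag manipulation used in the earlier proof. Nonsingularity of $U_k$ then gives $\Gamma T_k^\top=0$ for all $k$, i.e.\ $\Gamma T^\top=0$. Since $T$ is nonsingular, $\Gamma=0$, so $c=0$.

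Hence $\mathrm{rank}(\mathrm{D}\mathscr{H}_N(w))=mn$. The observability rank condition~(\ref{2-eq-ORC}) holds under $u_{[1,N]}$, which is therefore persistently exciting by definition, and Lemma~\ref{2-lem-ORC} additionally yields local observability.
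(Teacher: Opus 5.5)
Your proof is correct and follows essentially the paper's route: you verify that each $U_k=C_kW^{\dagger}$ is nonsingular and that $UW=C$, so $T_\chi=T\otimes\mathbf{1}_m$, and then rerun the rank argument of Theorem~\ref{th-local observable}; your rank inequality replaces the paper's factorization $C_k=T_{C_k}W$, and your direct identity $U_k\Gamma T_k^{\top}=0$ is a cleaner version of the paper's $\mathrm{diag}$ manipulation. On the zero-entry subtlety, only your first fix is legitimate, namely taking the rows of $C$ from open orthants, which is also what the paper assumes in the remark after the theorem; with a zero entry of $UW$, $h$ is not differentiable at $w$, so neither the rank condition nor Lemma~\ref{2-lem-ORC} applies, and the convention $\chi(0)=0$ does not fix this.
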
 
\begin{proof}
Condition (\ref{4-th-c2}) guarantees $\mathcal{R}(C_k)\subseteq\mathcal{R}(W)$. Together with~(\ref{4-th-c1}), it follows that the columns of $C_k^{\top}$  span  $\mathcal{R}(W)$, and $C_k=T_{C_k}W$ holds for some nonsingular $T_{C_k}\in \mathbb{R}^{m\times m}$. Consequently, $UW=C$ holds and each $U_k=C_kW^{\dagger}$ is nonsingular for $k\in\mathbb{Z}_{[1,n]}$, since $W$ has full row rank and $U_k=T_{C_k}WW^{\dagger}=T_{C_k}$.

Condition (\ref{4-th-c3}) ensures that  all columns of $C_k^{\top}$ share the same indicator vector, which is exactly the condition~(\ref{4-eq-th-subset indicator}). This implies that  all columns of $C_k^{\top}$ lie in the same orthant. Furthermore, since $T$ is nonsingular, the columns of $C_k^{\top}$ and $C_{k^{\prime}}$ for $k\neq k^{\prime}$ lie in different orthants. Thus, the conditions in Theorem~\ref{th-PEU} fulfill the requirements in the proof of Theorem~\ref{th-local observable} (see the paragraphs below~(\ref{3-eq-Hk})) to construct an input sequence satisfying the observability rank condition for any state $w$ meeting~(\ref{3-th-iff-eq}). Therefore, any input sequence designed via~(\ref{4-th-c4}) is persistently exciting. 
\end{proof}

As shown in the proof of Theorem~\ref{th-PEU}, each $C_k$ satisfying~(\ref{4-th-c1})-(\ref{4-th-c3}) is constructed from vectors within the cone $\mathcal{R}(W)  \cap O_{s_i}$ for some orthant $O_{s_i}$. Thus, there are infinitely many admissible choices for $C_k$, resulting in different persistently exciting input sequences. Note that $C_k$ is constructed from vectors in open orthants. Hence, it contains no zero entries and all entries of $UW$ are nonzero, ensuring that  $\mathrm{D}\mathscr{H}_N(w)$ is well defined (compare Footnote~\ref{footnote-differentiable}).

\subsubsection{Locally observable neighborhood}
The persistently exciting input  design method  in Theorem~\ref{th-PEU} requires prior knowledge of a specific state whose weight matrix $W$ satisfies condition~(\ref{3-th-iff-eq}). This requirement is restrictive, especially when employing MHE to estimate an FNN's optimal state, which is unknown a priori.  In the following theorem, we show that if an input sequence is constructed to render a state $w$ distinguishable, then all states within a neighborhood of $w$ become also distinguishable under the same input. 
\begin{theorem}\label{th-observable set}
	Consider a locally observable state $w$ of system~(\ref{2-eq-FNN problem}) with a weight matrix $W$, and a persistently exciting input sequence $u_{[1,N]}$. A state $w^{\prime}$, associated with a weight matrix $W^{\prime}$, is also excited by the same input and  locally observable if there exists a matrix $K\in\mathbb{R}^{N\times n}$ whose entries are all in $(-1,+\infty)$, such that 
	\begin{align}\label{th-observable set-c1}
		U\delta=K\circ UW
	\end{align}holds, where $U\coloneqq[u_1,u_2,\ldots,u_N]^{\top}$ and $\delta=W^{\prime}-W$.
\end{theorem}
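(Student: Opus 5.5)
The plan is to show that the Jacobian of the observability mapping at $w'$ coincides with the one at $w$. Then the observability rank condition transfers directly, and Lemma~\ref{2-lem-ORC} gives local observability of $w'$.

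By~(\ref{2-eq-dH-factorization}) and~(\ref{def-Tchi}), $\mathrm{D}\mathscr{H}_N(w)=T_u\,(\chi(UW)\otimes \mathrm{I}_m)$. Here $T_u$ depends only on the inputs, so it suffices to prove $\chi(UW')=\chi(UW)$. We also need every entry of $UW'$ to be nonzero, so that $\mathrm{D}\mathscr{H}_N(w')$ is well defined (cf. Footnote~\ref{footnote-differentiable}).

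First, because $u_{[1,N]}$ is persistently exciting at $w$, the Jacobian at $w$ must be well defined. Hence every entry of $UW$ is nonzero; for an input designed via Theorem~\ref{th-PEU} this is explicit in the remark following that theorem. Next, using~(\ref{th-observable set-c1}) and $W'=W+\delta$, we get
\begin{align*}
UW' = UW + U\delta = UW + K\circ UW = (\mathbf{1}_{N\times n}+K)\circ UW.
\end{align*}
Each entry of $K$ lies in $(-1,+\infty)$, so each entry of $\mathbf{1}_{N\times n}+K$ is strictly positive. Therefore $(UW')_{i,j}$ has exactly the same sign as $(UW)_{i,j}$, and it is in particular nonzero. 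This gives $\chi(UW')=\chi(UW)$ entrywise, i.e.\ $\chi_{i,j}$ is unchanged for all $i,j$.

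It follows that $\mathrm{D}\mathscr{H}_N(w')=T_u(\chi(UW')\otimes\mathrm{I}_m)=\mathrm{D}\mathscr{H}_N(w)$. This matrix has rank $mn$ because the input is persistently exciting at $w$. So the observability rank condition holds at $w'$ under the same input, meaning $w'$ is excited by it, and Lemma~\ref{2-lem-ORC} yields local observability of $w'$.

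The main subtle point is the nonzero-entry requirement on $UW$. Without it, a zero entry would make the sign argument fail at the ReLU kink and leave the derivative undefined. I would settle this by invoking the well-definedness implicit in the definition of a persistently exciting input, namely that the rank condition is only meaningful where the derivative exists. I would also remark that the conditions, being strict sign preservation, are open in $\delta$, so they in fact define a neighborhood of $w$ on which a common Jacobian holds.
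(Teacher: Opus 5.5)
Your proposal is correct and follows the same argument as the paper. Condition~(\ref{th-observable set-c1}) gives $UW'=(\mathbf{1}_{N\times n}+K)\circ UW$ with strictly positive entrywise scaling, so $\chi(UW')=\chi(UW)$. Through the factorization~(\ref{2-eq-dH-factorization}) this makes $\mathrm{D}\mathscr{H}_N(w')=\mathrm{D}\mathscr{H}_N(w)$, which has full column rank. Your explicit check that the entries of $UW$ are nonzero, so the derivative is well defined, is a small addition that the paper handles only through its footnote and the remark after Theorem~\ref{th-PEU}.
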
 
\begin{proof}
	We show that, under the conditions of the theorem, the observability rank condition in (\ref{2-eq-ORC}) also holds at the state $w^{\prime}$ under the same input $u_{[1,N]}$. From condition~(\ref{th-observable set-c1}), it follows that each entry of $U\delta$ equals the corresponding entry of $UW$ scaled by a constant greater than $-1$. This ensures that the signs of the entries in $UW+U\delta$ are the same as those in $UW$. Consequently, $\chi(UW+U\delta)=\chi(UW)$,  resulting in the same indicator matrix $T_{\chi}$ according to~(\ref{def-Tchi}) for $w^{\prime}$ and $w$. 
	
	Since the input sequence is persistently exciting for the state $w$, $\mathrm{D}\mathscr{H}_N(w)$ has full column rank. Thus, by the fact that $T_{\chi}$ remains unchanged for $w^{\prime}$ under the same input matrix $T_u$, $\mathrm{D}\mathscr{H}_N(w^{\prime})$ also remains identical to $\mathrm{D}\mathscr{H}_N(w)$ and has full column rank (see~(\ref{2-eq-dH-factorization})). Therefore, the unknown state $w^{\prime}$ is both locally observable and excited by the input $u_{[1,N]}$.
\end{proof}

Theorem~\ref{th-observable set} constrains the distance between a locally observable state $w$ and other unknown states $w^{\prime}$ that can also be excited by the same input. Note that the satisfaction of condition~(\ref{th-observable set-c1}) admits a neighborhood of the state $w$ when all entries of $UW$ are nonzero. This follows because, for sufficiently small $\delta$, condition~(\ref{th-observable set-c1}) can always be satisfied by appropriately choosing a matrix $K$, as the allowed interval for the entries of $K$ contains $0$ in its interior.   

The following definition and subsequent result will be useful later in Section~\ref{section-MHE}. 
\begin{definition}[locally observable neighborhood of $w$]\label{def-observable set}
	 For a locally observable state $w$ with weight matrix $W$, and a persistently exciting input $U$, the locally observable neighborhood $\mathbb{W}_0$ of $w$ is defined as the set of all states $w^{\prime}$ that are also excited by $U$ and  satisfy:
	 \begin{align*}
	 	\mathbb{W}_0\!=\!\{\mathrm{vec}(W^{\prime})\mid U(W^{\prime}\!-\!W)\!=\!K\circ UW, K_{i,j}\!\in\!(-1,+\infty)\},
	 \end{align*}where $\mathrm{vec}(\cdot)$ denotes the vectorization operator that stacks the columns of a matrix into a vector.
\end{definition}

In the following, we show that $\mathbb{W}_0$ is a convex set, which plays an important role in  the design and analysis of an MHE-based training algorithm for FNNs.
\begin{proposition}\label{prop-convex}
For a locally observable state $w$, the locally observable neighborhood $\mathbb{W}_0$ of $w$ is a convex set.
\end{proposition}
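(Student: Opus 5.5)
I will prove convexity by rewriting the defining condition of $\mathbb{W}_0$ as a family of strict sign constraints on the entries of $UW'$. Two facts from the proof of Theorem~\ref{th-observable set} and the discussion after Theorem~\ref{th-PEU} will be used. First, membership in $\mathbb{W}_0$ requires a matrix $K$ with $U(W'-W)=K\circ UW$ and all entries $K_{i,j}>-1$. Second, for a persistently exciting input constructed as in Theorem~\ref{th-PEU}, every entry of $UW$ is nonzero.

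\textbf{Step 1: reformulate the condition entrywise.} Fix an index pair $(i,j)$ and write $a_{ij}\coloneqq (UW)_{i,j}\neq 0$. The condition reads
\begin{align*}
(UW')_{i,j}-a_{ij}=K_{i,j}\,a_{ij}, \qquad K_{i,j}>-1.
\end{align*}
Since $a_{ij}\neq 0$, this is equivalent to $(UW')_{i,j}=(1+K_{i,j})a_{ij}$ with $1+K_{i,j}>0$. In other words, it holds if and only if $(UW')_{i,j}\,a_{ij}>0$, i.e., $(UW')_{i,j}$ has the same strict sign as $a_{ij}$. The converse direction is immediate: given the sign condition, set $K_{i,j}\coloneqq (UW')_{i,j}/a_{ij}-1>-1$.

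Hence
\begin{align*}
\mathbb{W}_0=\{\mathrm{vec}(W') \mid (UW')_{i,j}(UW)_{i,j}>0 \ \text{for all } i,j\}.
\end{align*}
The extra requirement in the definition, that $w'$ be excited by $U$, is then automatic by Theorem~\ref{th-observable set}. It therefore imposes no further restriction on the set.

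\textbf{Step 2: conclude convexity.} Each map $\mathrm{vec}(W')\mapsto (UW')_{i,j}\,(UW)_{i,j}$ is linear in $\mathrm{vec}(W')$. Each constraint therefore defines an open half-space, and $\mathbb{W}_0$ is a finite intersection of open half-spaces, which is convex.

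One could instead check convexity directly. Take $w^1,w^2\in\mathbb{W}_0$ with multipliers $K^1,K^2$ and let $\lambda\in[0,1]$. By linearity of $U(\cdot)$, the convex combination $\lambda W^1+(1-\lambda)W^2$ satisfies the defining equation with $K=\lambda K^1+(1-\lambda)K^2$. Each entry of this $K$ is a convex combination of numbers in $(-1,+\infty)$, so it stays in that interval. This argument does not even need the entries of $UW$ to be nonzero, and it is probably the cleanest to write.

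\textbf{Main obstacle.} The only delicate point is the clause ``excited by $U$'' in Definition~\ref{def-observable set}, i.e., making sure it is not an independent nonconvex constraint. I would dispose of it by citing Theorem~\ref{th-observable set}, which guarantees that every $W'$ satisfying the $K$-condition is excited by $U$.
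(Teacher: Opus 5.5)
Your proof is correct. The fallback you sketch at the end of Step~2 is exactly the paper's proof: the paper forms $W_3=\alpha W_1+(1-\alpha)W_2$, uses linearity of $U(\cdot)$ to write $UW_3=UW+K_3\circ UW$ with $K_3=\alpha K_1+(1-\alpha)K_2$, and concludes from convexity of $(-1,+\infty)$.

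Your primary route is genuinely different. By eliminating $K$ you characterize $\mathbb{W}_0$ as $\{\mathrm{vec}(W') \mid (UW')_{i,j}(UW)_{i,j}>0 \ \text{for all } i,j\}$, a finite intersection of open half-spaces. This gives more than convexity:
\begin{itemize}
\item $\mathbb{W}_0$ is open and polyhedral.
\item Because the constraints are homogeneous, $\mathbb{W}_0$ is even a convex cone, invariant under positive scaling of $W'$.
\item It shows that $\mathbb{W}_0$ is exactly the set of weights reproducing the strict sign pattern of $UW$ on the data. This is the property the mean-value argument in Section~\ref{section-MHE} relies on.
\end{itemize}

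The cost of this route is the hypothesis that all entries of $UW$ are nonzero. You justify it only for inputs built via Theorem~\ref{th-PEU}, whereas Definition~\ref{def-observable set} admits any persistently exciting $U$. If some $(UW)_{i,j}=0$, your equivalence with $(UW')_{i,j}(UW)_{i,j}>0$ fails. The $(i,j)$ constraint instead becomes the linear equality $(UW')_{i,j}=0$. The set is still convex, but you should state this case. The paper's direct argument needs no such case split.

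Your explicit handling of the ``excited by $U$'' clause via Theorem~\ref{th-observable set} fills in a step the paper's proof leaves implicit.
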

\begin{proof}
	Consider states $w_1\in\mathbb{W}_0$ and $w_2\in\mathbb{W}_0$, with corresponding weight matrices $W_1$ and $W_2$, respectively.  For any scalar $\alpha\in[0,1]$, we have a $W_3=\alpha W_1+(1-\alpha)W_2$, with associated state $w_3$. By~(\ref{th-observable set-c1}), it follows that $U(W_1-W)=K_1\circ UW$ and $U(W_2-W)=K_2\circ UW$, where $K_1$ and $K_2$ satisfy the condition in Theorem~\ref{th-observable set}. Therefore, $UW_3$ is given by
		\begin{align}
		UW_3&=U(\alpha W_1+(1-\alpha)W_2)\notag\\
		&=\alpha UW_1+(1-\alpha)UW_2\notag\\
		&=UW+\alpha K_1\circ UW+(1-\alpha)K_2\circ UW\notag\\
		&= UW+K_3\circ U W,
	\end{align}where $(K_3)_{i,j}\in(-1,+\infty)$, since $(-1,+\infty)$ is a convex set and $\alpha (K_1)_{i,j}+(1-\alpha)(K_2)_{i,j}=(K_3)_{i,j}$ for all $i\in\mathbb{Z}_{[0,N]}$, $j\in\mathbb{Z}_{[0,n]}$. This result yields that $w_3\in \mathbb{W}_0$ and $\mathbb{W}_0$ is a convex set.
\end{proof}
\subsection{Analysis of general FNNs}\label{section-multi FNN}
In the following, we extend our analysis to more general FNN architectures. We first consider two-layer FNNs with bias weights, establishing local observability of the weights under additional constraints. We then extend the analysis to multi-layer FNNs, showing that local observability properties no longer hold in general. 
\subsubsection{Two-layer FNNs with bias weights} 
Consider two-layer FNNs with an input layer of $m$ nodes, a hidden layer of $n$ nodes, and a single output node as shown in Fig.~\ref{fig:FNN2}. In addition, $n$ bias weights  $b_1, b_2,\cdots,b_n$ are added to the  hidden nodes. 

With a slight abuse of notation, we reuse the symbol $w$ in this section to denote the state of the reformulated FNN system,  $w\coloneqq[w_1^{\top},\ldots,w_n^{\top},b_1,\ldots,b_n]^{\top}\in\mathbb{R}^{(m+1)n}$, which now includes the bias weights. Then, the input-output mapping is given by
\begin{align}\label{plant-FNN-bias}
	y=h(w,u)=\sum_{j=1}^{n}\sigma(w_j^{\top}u+b_j).
\end{align}Analogous to (\ref{2-eq-derivative}), (\ref{2-eq-indicator}), and~(\ref{2-eq-H-derivative}), we obtain the Jacobian of the observability mapping under an input sequence $u_{[1,N]}$, given by 
\begin{align}
	&\mathrm{D}\mathscr{H}_{N}(w)\notag\\
	&=\begin{bsmallmatrix}
		\frac{\partial h}{\partial w_1}(w,u_1) &\cdots &\frac{\partial h}{\partial w_n}(w,u_1)&\frac{\partial h}{\partial b_1}(w,u_{1})&\cdots&\frac{\partial h}{\partial b_n}(w,u_{1})\\
		\frac{\partial h}{\partial w_1}(w,u_2) &\cdots &\frac{\partial h}{\partial w_n}(w,u_2)&\frac{\partial h}{\partial b_1}(w,u_{2})&\cdots&\frac{\partial h}{\partial b_n}(w,u_{2})\\
		\vdots &\ddots&\vdots&\vdots&\ddots&\vdots\\
		\frac{\partial h}{\partial w_1}(w,u_{N}) &\cdots &\frac{\partial h}{\partial w_n}(w,u_{N})&\frac{\partial h}{\partial b_1}(w,u_{N})&\cdots&\frac{\partial h}{\partial b_n}(w,u_{N})\\
	\end{bsmallmatrix}\notag\\
	&=\begin{bmatrix}\label{dH-b}
		u_1^{\top} \chi_{1,1}  & \cdots &u_1^{\top} \chi_{1,n}&\chi_{1,1} &\cdots&\chi_{1,n}\\
		u_2^{\top} \chi_{2,1}  & \cdots &u_2^{\top} \chi_{2,n}&\chi_{2,1} &\cdots&\chi_{2,n}\\
		\vdots&\ddots&\vdots&\vdots&\ddots&\vdots\\
		u_{N}^{\top} \chi_{N,1}  & \cdots &u_{N}^{\top} \chi_{N,n}&\chi_{N,1} &\cdots&\chi_{N,n}\\
	\end{bmatrix}.
\end{align} 
Note that $\chi_{i,j}=\chi(u_i^{\top}w_j+b_j)$, which differs slightly from the definition given in~(\ref{2-eq-indicator}) in Section~\ref{2-layer FNN}. The corresponding weight matrix for state $w$ is now denoted by
\begin{align}
	W\coloneqq\begin{bmatrix}
		W^1\\b^{\top}
	\end{bmatrix}\in\mathbb{R}^{(m+1)\times n},
\end{align}where $W^1=[w_1, \ldots, w_n]\in \mathbb{R}^{m\times n}$, $b=[b_1, \ldots, b_n]^{\top}\in\mathbb{R}^{n}$. Since the bias weights can be interpreted as an additional input node with constant value $1$, and following the analysis in the proof of Theorem~\ref{th-local observable}, we consider only weight matrices $W$ with full row rank.

In the following lemma, we present some results on Greville's method that are obtained by adapting those of~\cite{greville1960some} using the fact that $(A^{\dagger})^{\top} = (A^{\top})^{\dagger}$. This provides a useful expression for the Moore-Penrose inverse that will be exploited in Theorem~\ref{th-bias}. 
	\begin{lemma}[\!\!\cite{greville1960some}]\label{lem-greville}
		Consider a matrix $A_{k}\!\coloneqq\! \begin{bsmallmatrix}
				 A_{k-1}\\a_k
		\end{bsmallmatrix}$, where $a_k$ is the $k$th row of $A_{k}$. Then, the Moore-Penrose inverse of $A_k$ can be presented as
		\begin{align}
			A_k^{\dagger}=\begin{bmatrix}
				A_{k-1}^{\dagger}-b_kd_k,
				b_k
			\end{bmatrix},
		\end{align}	where
		\begin{align}
			d_k&=a_kA_{k-1}^{\dagger},\\
			c_k&=a_k-d_kA_{k-1},\\
			b_k&=\begin{cases}
				c_k^{\dagger},&\text{if}\quad c_k \neq0,\\
				(1+d_kd_k^{\top})^{-1}d_kA_{k-1}^{\dagger}, &\text{if}\quad c_k =0.
			\end{cases}
		\end{align}
	\end{lemma}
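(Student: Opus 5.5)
We would not re-derive Greville's recursion from scratch. Instead we take the classical column-wise statement of \cite{greville1960some} as given and transport it to the row-wise setting of Lemma~\ref{lem-greville} by transposition, using $(A^{\dagger})^{\top}=(A^{\top})^{\dagger}$.

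Concretely, set $B_{k}\coloneqq A_k^{\top}=[B_{k-1},\,\beta_k]$ with $B_{k-1}=A_{k-1}^{\top}$ and $\beta_k=a_k^{\top}$. For a matrix obtained by appending a column, Greville's theorem states that
\begin{align*}
B_k^{\dagger}=\begin{bmatrix} B_{k-1}^{\dagger}-\delta_k\beta_k^{\prime}\\ \beta_k^{\prime}\end{bmatrix},
\end{align*}
where $\delta_k=B_{k-1}^{\dagger}\beta_k$ and $\gamma_k=\beta_k-B_{k-1}\delta_k$. The row vector $\beta_k^{\prime}$ is given by $\beta_k^{\prime}=\gamma_k^{\dagger}$ if $\gamma_k\neq 0$, and by $\beta_k^{\prime}=(1+\delta_k^{\top}\delta_k)^{-1}\delta_k^{\top}B_{k-1}^{\dagger}$ otherwise.

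The key step is to transpose every quantity and identify it with the paper's notation.
\begin{itemize}
\item $\delta_k^{\top}=a_kA_{k-1}^{\dagger}$, so $d_k=\delta_k^{\top}$ is a row vector.
\item $\gamma_k^{\top}=a_k-d_kA_{k-1}=c_k$.
\item In the case $\gamma_k\neq 0$, $(\beta_k^{\prime})^{\top}=(\gamma_k^{\dagger})^{\top}=(\gamma_k^{\top})^{\dagger}=c_k^{\dagger}$.
\item In the case $\gamma_k=0$, $(\beta_k^{\prime})^{\top}=(1+d_kd_k^{\top})^{-1}(B_{k-1}^{\dagger})^{\top}\delta_k=(1+d_kd_k^{\top})^{-1}A_{k-1}^{\dagger}d_k^{\top}$.
\end{itemize}
We therefore set $b_k\coloneqq(\beta_k^{\prime})^{\top}$, which is an $n\times 1$ column. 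Transposing the block formula then gives
\begin{align*}
A_k^{\dagger}=(B_k^{\dagger})^{\top}=\begin{bmatrix}(B_{k-1}^{\dagger})^{\top}-(\beta_k^{\prime})^{\top}\delta_k^{\top}, & (\beta_k^{\prime})^{\top}\end{bmatrix}=\begin{bmatrix}A_{k-1}^{\dagger}-b_kd_k, & b_k\end{bmatrix},
\end{align*}
which is the claimed expression.

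The main point requiring care is the second case of $b_k$. Dimensional consistency forces the product to be $A_{k-1}^{\dagger}d_k^{\top}$, since $d_kA_{k-1}^{\dagger}$ is not defined. We would state explicitly that the displayed formula is to be read this way, and check it against the scalar factor $1+d_kd_k^{\top}$.

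As an independent sanity check, and as a fallback if one prefers not to rely on the column version, we would verify the four Penrose conditions for the candidate $X=[A_{k-1}^{\dagger}-b_kd_k,\;b_k]$ directly. Treat the two cases separately.
\begin{itemize}
\item If $c_k\neq 0$: $c_k$ lies in the orthogonal complement of $\mathcal{R}(A_{k-1})$, so $c_kA_{k-1}^{\dagger}=0$ and $c_kc_k^{\dagger}=1$. These identities make $A_kX$ and $XA_k$ symmetric projections.
\item If $c_k=0$: $a_k=d_kA_{k-1}$, and the conditions reduce to scalar identities in $1+d_kd_k^{\top}$.
\end{itemize}
This verification is routine algebra, so the transposition argument above is the proof we would actually write.
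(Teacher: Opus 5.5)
Your proposal is correct and follows the same route as the paper, which likewise obtains the lemma by transposing Greville's column-append formula via $(A^{\dagger})^{\top}=(A^{\top})^{\dagger}$. Your remark on the second case is also right: as printed, $d_kA_{k-1}^{\dagger}$ is not dimensionally defined, and the correct transposed expression is $(1+d_kd_k^{\top})^{-1}A_{k-1}^{\dagger}d_k^{\top}$ (only the case $c_k\neq 0$ is actually used later, because $W$ has full row rank in Theorem~\ref{th-bias}).
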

	
In the following, we extend the results of Theorems~\ref{th-local observable} and~\ref{th-PEU} to FNN architectures that incorporate bias weights.

\begin{theorem}\label{th-bias}
	Consider the FNN in~(\ref{plant-FNN-bias}) with weight matrix $W\in\mathbb{R}^{(m+1)\times n}$ that has full row rank and no zero columns. Assume the affine subspace $\mathcal{R}(W^1)+b$ intersects $\ell$ open orthants  $\mathcal{O}_{s_i}$ in $\mathbb{R}^{n}$ for $i\in \mathbb{Z}_{[1,\ell]}$, with corresponding sign vectors $s_i\in\{-1,1\}^{ 1\times n}$. Let the sign matrix of $\mathcal{R}(W^1)+b$ be $S_{RWb}\coloneqq [s_1^{\top},s_2^{\top},\ldots,s_{\ell}^{\top}]^{\top}\in \mathbb{R}^{\ell\times n}$.  The state $w$ is locally observable if the corresponding weight matrix satisfies
	\begin{align}\label{3-th-iff-eq-bias}
		\mathrm{rank} (\chi(S_{RWb}))=n.
	\end{align}Moreover, for any $C=[C_1^{\top},\ldots, C_n^{\top}]^{\top}$, $C_k\in\mathbb{R}^{(m+1)\times n}$, $k\in\mathbb{Z}_{[1,n]}$, satisfying
	\begin{align}
	&\mathrm{rank}(C_k)=m+1, k\in\mathbb{Z}_{[1,n]},\label{3-th-c1}\\
	&C_k=C_kW^{\dagger}W, k\in\mathbb{Z}_{[1,n]}, \label{3-th-c2}\\
	&\chi(C)=T\otimes \mathbf{1}_{m+1},\label{3-th-c3}\\
	&C_k(b^{\top}-b^{\top}(W^1)^{\dagger}(W^1))^{\dagger}=\mathbf{1}_{m+1},  k\in\mathbb{Z}_{[1,n]}, \label{3-th-c4}
	\end{align}where $T\in\mathbb{R}^{n\times n}$ is a nonsingular matrix with binary entries (each element is either $1$ or $0$), the input sequence $u_{[1,N]}$ derived by
	\begin{align}
	U\coloneqq [u_1, u_2,\ldots,u_N]^{\top}= C(W^1)^{\dagger}-\mathbf{1}_{N}b^{\top}(W^1)^{\dagger},\label{3-th-c5}
	\end{align} is persistently exciting.
\end{theorem}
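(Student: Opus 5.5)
The plan is to reduce Theorem~\ref{th-bias} to the bias-free argument of Theorems~\ref{th-local observable} and~\ref{th-PEU} by treating the bias as an extra input node with constant value $1$. With the augmented input $\bar u_i\coloneqq[u_i^{\top},1]^{\top}\in\mathbb{R}^{m+1}$, the Jacobian~(\ref{dH-b}) is, up to a column permutation, exactly of the form~(\ref{2-eq-H-derivative}) with $u_i$ replaced by $\bar u_i$ and $W$ replaced by the $(m+1)\times n$ matrix $W$. Moreover, $\chi_{i,j}=\chi(\bar u_i^{\top}W e_j)$, so $T_\chi=\chi(\bar U W)$ with $\bar U\coloneqq[U,\mathbf{1}_N]$. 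The factorization~(\ref{2-eq-dH-factorization}) and the contradiction argument from the proof of Theorem~\ref{th-local observable} then carry over verbatim, with $m$ replaced by $m+1$, \emph{provided} each block $\bar U_k\in\mathbb{R}^{(m+1)\times(m+1)}$ is nonsingular and $\bar U_k W=C_k$ with $\chi(C_k)=T_k\otimes\mathbf{1}_{m+1}$. The only genuinely new ingredient is that the last column of $\bar U$ must equal $\mathbf{1}_N$. The augmented rows $\bar u_i^{\top}W=u_i^{\top}W^1+b^{\top}$ range over the affine set $\mathcal{R}(W^1)+b$ rather than over $\mathcal{R}(W)$, which explains why condition~(\ref{3-th-iff-eq-bias}) is stated for this affine set.

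For the first claim (local observability under~(\ref{3-th-iff-eq-bias})), I will pick $n$ rows $T_k$ of $\chi(S_{RWb})$ that are linearly independent. For each $k$ I will choose $m+1$ points of $(\mathcal{R}(W^1)+b)\cap\mathcal{O}_{s_{i(k)}}$ that are affinely independent. Such points exist because the intersection is a nonempty relatively open subset of an $m$-dimensional affine set (rank $m$ of $W^1$ follows from full row rank of $W$). This is the affine analogue of Proposition~\ref{3-proposition-orthant and subspace reconstruction}, proved in the same way by perturbing within the affine set. Writing the points as $u_{(k-1)m'+r}^{\top}W^1+b^{\top}$ with $m'=m+1$ defines the inputs. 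Affine independence of the points together with full row rank of $W$ gives that $\bar U_k$ is nonsingular: a relation $\bar U_k^{\top}\lambda=0$ forces $\sum\lambda_r=0$ and $\sum\lambda_r u_r=0$, contradicting affine independence. The rest of the argument is identical to the proof of Theorem~\ref{th-local observable}, yielding $\Gamma[T_1^{\top},\ldots,T_n^{\top}]=\mathbf{0}$ and hence full column rank. By Lemma~\ref{2-lem-ORC}, $w$ is then locally observable.

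For the second claim, I will show that~(\ref{3-th-c5}) produces a $U$ with $\bar U W=C$. By~(\ref{3-th-c2}), $C_k=T_{C_k}W$ with $T_{C_k}=C_kW^{\dagger}$, and by~(\ref{3-th-c1}) the matrix $T_{C_k}$ is nonsingular. The requirement is that the last column of $T_{C_k}$ equal $\mathbf{1}_{m+1}$, so that $T_{C_k}=\bar U_k$. To express the last column of $W^{\dagger}$, I will use Lemma~\ref{lem-greville} with $A_{k-1}=W^1$ and $a_k=b^{\top}$. Full row rank of $W$ gives $c\coloneqq b^{\top}-b^{\top}(W^1)^{\dagger}W^1\neq0$, so the last column of $W^{\dagger}$ is $c^{\dagger}$ and the first block is $(W^1)^{\dagger}-c^{\dagger}b^{\top}(W^1)^{\dagger}$. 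Condition~(\ref{3-th-c4}) says exactly that $C_kc^{\dagger}=\mathbf{1}_{m+1}$, so the last column of $\bar U_k$ is indeed all ones. The first $m$ columns are then $C_k(W^1)^{\dagger}-\mathbf{1}\,b^{\top}(W^1)^{\dagger}$, which matches~(\ref{3-th-c5}). Condition~(\ref{3-th-c3}) with $T$ nonsingular then plays the role of~(\ref{4-eq-th-subset indicator}), and since the entries of $C$ are nonzero the Jacobian is well defined. The argument of Theorem~\ref{th-PEU} finishes the proof.

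I expect the main obstacle to be this Greville bookkeeping, namely verifying that $[C(W^1)^{\dagger}-\mathbf{1}_Nb^{\top}(W^1)^{\dagger},\,\mathbf{1}_N]=CW^{\dagger}$ under~(\ref{3-th-c4}), together with the affine version of Proposition~\ref{3-proposition-orthant and subspace reconstruction}. Once these are in place, the linear-algebra contradiction is routine.
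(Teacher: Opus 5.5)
Your proposal is correct and essentially matches the paper's proof. For the second claim you reproduce its Greville computation, showing $C_kW^{\dagger}=[U_k,\mathbf{1}_{m+1}]$ under~(\ref{3-th-c4}) and then reusing the argument of Theorem~\ref{th-local observable}. For the first claim you build the inputs directly as preimages of affinely independent points of $(\mathcal{R}(W^1)+b)\cap\mathcal{O}_{s_{i(k)}}$, which is slightly more direct than the paper's route of first constructing $C$ and checking~(\ref{3-th-c4}).

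One small caveat, which the paper shares: conditions~(\ref{3-th-c1})--(\ref{3-th-c4}) do not by themselves force the entries of $C$ to be nonzero, since $\chi(0)=0$. So your claim that the Jacobian is well defined for an arbitrary admissible $C$ implicitly assumes that $C$ has no zero entries.
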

\begin{proof}
We first show the existence of a matrix $C$ satisfying conditions~(\ref{3-th-c1})-(\ref{3-th-c4}) when condition~(\ref{3-th-iff-eq-bias}) holds. Since $W$ has full row rank, each intersection $(\mathcal{R}(W^1)+b) \cap \mathcal{O}_{s_{i}}$ contains $m+1$ linearly independent vectors, following a similar analysis as in Proposition~\ref{3-proposition-orthant and subspace reconstruction}. Therefore, conditions~(\ref{3-th-c1})-(\ref{3-th-c3}) are feasible for a matrix $C_k$ constructed from such $m+1$ vectors in the intersection, i.e., the columns of $C_k^{\top}$ correspond to such $m+1$ vectors. Since the columns of $C_k^{\top}$ lie in the affine subspace $\mathcal{R}(W^1)+b$ and condition~(\ref{3-th-c1}) holds, we have 
	\begin{align}
		C_k=[T_{C_k},\mathbf{1}_{m+1}]\begin{bsmallmatrix}
			W^1\\b^{\top}
		\end{bsmallmatrix} \label{ck}
	\end{align} with a nonsingular matrix $[T_{C_k},\mathbf{1}_{m+1}]$. From Lemma~\ref{lem-greville} and the fact that $W$ has full row rank,  it follows that
	\begin{align}\label{eq-th3-ckw}
	C_kW^{\dagger}=C_k\begin{bsmallmatrix}
		W^1\\b^{\top}
	\end{bsmallmatrix}^{\dagger}=C_k[(W^1)^{\dagger}-\alpha\beta,\alpha],
	\end{align} where
	$\beta=b^{\top}(W^1)^{\dagger}$, $\alpha=(b^{\top}-\beta W^1)^{\dagger}$. From~(\ref{ck}), it follows that 
	$C_kW^{\dagger} = [T_{C_k},\mathbf{1}_{m+1}]$. Thus, we obtain
	\begin{align}\label{eq-th3-ckalpha}
	C_k(b^{\top}-b^{\top}(W^1)^{\dagger}W^1)^{\dagger}=C_k\alpha=\mathbf{1}_{m+1},
\end{align}and therefore condition~(\ref{3-th-c4}) holds. 	
	
We continue this proof by showing that under the designed input sequence $u_{[1,N]}$ by (\ref{3-th-c5}), the observability rank condition holds at the state $w$, i.e., $\mathrm{D}\mathscr{H}_{N}(w)$ in (\ref{dH-b}) has full column rank $(m+1)n$.

With a slight abuse of notation, we reuse the symbol $U_k$ and define $U_k\coloneqq[u_{(k-1)(m+1)+1},  \cdots, u_{k(m+1)}]^{\top}\in\mathbb{R}^{(m+1)\times m}$, for $k\in \mathbb{Z}_{[1,n]}$.  Accordingly, the input design method in~(\ref{3-th-c5}) implies that  $U_k=C_k(W^1)^{\dagger}-\mathbf{1}_{m+1}b^{\top}(W^1)^{\dagger}$ for $k\in\mathbb{Z}_{[1,n]}$. By~(\ref{eq-th3-ckalpha}), it follows that
\begin{align}
	 U_k=C_k(W^1)^{\dagger}-C_k\alpha\beta.
\end{align} Combining with~(\ref{eq-th3-ckw}) and~(\ref{eq-th3-ckalpha}), we obtain $C_kW^{\dagger}=[U_k,\mathbf{1}_{m+1}]$. Since $C_k=C_kW^{\dagger}W$ in~(\ref{3-th-c2}), we have $[U_k,\mathbf{1}_{m+1}]W=C_k$, and hence $\chi([U,\mathbf{1}_N]W)=\chi(C)$. Since $W$ and $C_k$ have full row rank, it follows that $\rank{[U_k,\mathbf{1}_{m+1}]}=m+1$, i.e., $[U_k,\mathbf{1}_{m+1}]$ is nonsingular. Together with~(\ref{3-th-c3}), we conclude that $\mathrm{D}\mathscr{H}_{N}(w)$ has full column rank, following an analysis similar to that in the proof of Theorem~\ref{th-local observable} (see the paragraphs below~(\ref{3-eq-Hk})). Therefore, the constructed input by~(\ref{3-th-c5}) is persistently exciting and the state $w$ is locally observable.
\end{proof}

Note that an affine subspace $\mathcal{R}(W^1)+b$  rather than a row space $\mathcal{R}(W)$ is considered, since an additional constraint (\ref{3-th-c4})  is imposed on $C_k$ compared with the design procedures for $C_k$ in Theorem~\ref{th-PEU}. As shown in the proof of Theorem~\ref{th-bias}, constraint~(\ref{3-th-c4}) ensures $C_k=[U_k,\mathbf{1}_{m+1}]W=U_kW^1+\mathbf{1}_{m+1}b^{\top}$, which implies that each column of $C_k^{\top}$  lies in the affine subspace $\mathcal{R}(W^1)+b$.  This distinction arises from the presence of bias weights, which can be equivalently regarded as introducing an additional input node with a constant value of $1$ in the FNN. This fixed-value input node reduces the degrees of freedom in selecting $C_k$, as it must satisfy the constraint $[U_k,\mathbf{1}_{m+1}]W=C_k$. 

\subsubsection{Multi-layer FNNs}
In this section, we analyze the local observability properties  of general multi-layer FNNs. We show that, in general, multi-layer FNNs are not locally observable regardless of the inputs used to collect training data. Although this analysis also holds for deeper FNNs, for the sake of simplicity, the following results are presented for the two-layer network as shown in  Fig.~\ref{fig:FNN2}, but with variable output weights. 

  Let $w^k_{i,j} (k=1,2)$ denote the weight from the $i$th node in layer $k-1$  to the $j$th node in layer $k$. Define $w^k_j\coloneqq[w^k_{1,j}, \ldots, w^k_{m,j}]^{\top}$ as the vector of weights from layer $k-1$ to the $j$th node of layer  $k$. Then, the weights from layer $k-1$ to  layer $k$ are denoted by $W^k\coloneqq[w^k_1, \ldots, w^k_n]$. Therefore, for the considered FNN, we have
\begin{align*}
	W^1&\!=\![w^1_1,\ldots,w^1_n], ~w^1_i\!=\![w^1_{1,i},\ldots,w^1_{m,i}]^{\top}\in\mathbb{R}^{m}, ~i\in \mathbb{Z}_{[1,n]}\\
	W^2&=[w^2_1,\ldots,w^2_n]^{\top}, ~w^2_i\in \mathbb{R}, ~i\in\mathbb{Z}_{[1,n]},\\
	b&=[b_1,\ldots,b_n]^{\top}, ~b_i\in\mathbb{R}, ~i\in\mathbb{Z}_{[1,n]},
\end{align*} where $w^1_i$ is a vector and $w^2_i$ is a scalar.
Again, we reformulate this FNN into a dynamical system of the form in~(\ref{2-eq-FNN problem}), and denote the state of the system by $w\coloneqq[(w^1_1)^{\top},\ldots,(w^1_n)^{\top},b^{\top},w^2_1,\ldots,w^2_n]^{\top}$. Then, the input-output mapping is given by 
\begin{align}
	y=h(w,u)=\sigma(u^{\top}W^1+b)W^2,
\end{align}where $u\in\mathbb{R}^{m}$. The Jacobian matrix of the observability mapping under an input sequence $u_{[1,N]}$ is given by (\ref{full-dH}) (see next page).
\begin{figure*}[!t]
	\begin{equation}
\begin{aligned}
	&\mathrm{D}\mathscr{H}_{N}(w)=\begin{bmatrix}
		\frac{\partial h}{\partial w^1_1}(w,u_1) &\cdots &\frac{\partial h}{\partial w^1_n}(w,u_1)&\frac{\partial h}{\partial b_1}(w,u_{1})&\cdots&\frac{\partial h}{\partial b_n}(w,u_{1})&\frac{\partial h}{\partial w^2_1}(w,u_{1})&\cdots&\frac{\partial h}{\partial w^2_n}(w,u_{1})\\
		\frac{\partial h}{\partial w^1_1}(w,u_2) &\cdots &\frac{\partial h}{\partial w^1_n}(w,u_2)&\frac{\partial h}{\partial b_1}(w,u_{2})&\cdots&\frac{\partial h}{\partial b_n}(w,u_{2})&\frac{\partial h}{\partial w^2_1}(w,u_{2})&\cdots&\frac{\partial h}{\partial w^2_n}(w,u_{2})\\
		\vdots &\ddots&\vdots&\vdots&\ddots&\vdots&\vdots&\ddots&\vdots\\
		\frac{\partial h}{\partial w^1_1}(w,u_{N}) &\cdots &\frac{\partial h}{\partial w^1_n}(w,u_{N})&\frac{\partial h}{\partial b_1}(w,u_{N})&\cdots&\frac{\partial h}{\partial b_n}(w,u_{N})&\frac{\partial h}{\partial w^2_1}(w,u_{N})&\cdots&\frac{\partial h}{\partial w^2_n}(w,u_{N})\\
	\end{bmatrix}\\
	&=\begin{bmatrix}\label{full-dH}
		u_1^{\top} w^2_1\chi_{1,1} &  \cdots &u_1^{\top}w^2_n \chi_{1,n}&w^2_1\chi_{1,1} &\cdots&w^2_n\chi_{1,n}&(u_1^{\top}w^1_1+b_1)\chi_{1,1} &\cdots&(u_1^{\top}w^1_n+b_n)\chi_{1,n}\\
		u_2^{\top} w^2_1\chi_{2,1} &  \cdots &u_2^{\top} w^2_n\chi_{2,n}&w^2_1\chi_{2,1} &\cdots&w^2_n\chi_{2,n}&(u_2^{\top}w^1_1+b_1)\chi_{2,1} &\cdots&(u_2^{\top}w^1_n+b_n)\chi_{2,n}\\
		\vdots&\ddots&\vdots&\vdots&\ddots&\vdots&\vdots&\ddots&\vdots\\
		u_{N}^{\top} w^2_1\chi_{N,1} & \cdots &u_{N}^{\top} w^2_n\chi_{N,n}&w^2_1\chi_{N,1} &\cdots&w^2_n\chi_{N,n}&(u_N^{\top}w^1_1+b_1)\chi_{N,1} &\cdots&(u_N^{\top}w^1_n+b_n)\chi_{N,n}
	\end{bmatrix}.
\end{aligned} 
	\end{equation}
\end{figure*}
It is straightforward to observe that the columns in (\ref{full-dH}) are linearly dependent, since\footnote{Note that the following expression holds for $w_1^2\neq 0$. For $w_1^2=0$, the first column of $\mathrm{D}\mathscr{H}_{N}(w)$ in~(\ref{full-dH}) is zero, and hence the observability rank condition is trivially not satisfied.}

\begin{align*}
	\frac{1}{w^2_1}\begin{bmatrix}
		u_1^{\top} w^2_1\chi_{1,1}\\u_2^{\top} w^2_1\chi_{2,1}\\\vdots\\u_{N}^{\top} w^2_1\chi_{N,1}
	\end{bmatrix}w^1_1\!+\! \frac{b_1}{w^2_1}\begin{bmatrix}
		w^2_1\chi_{1,1}\\w^2_1\chi_{2,1}\\\vdots\\w^2_1\chi_{N,1}
	\end{bmatrix}\!\!=\!\!\begin{bmatrix}
		(u_1^{\top}w^1_1\!+\!b_1)\chi_{1,1}\\(u_2^{\top}w^1_1\!+\!b_1)\chi_{2,1}\\\vdots\\(u_N^{\top}w^1_1\!+\!b_1)\chi_{N,1}
	\end{bmatrix},
\end{align*}where $w^2_1$ and $b_1$ are scalars, and $w^1_1\in\mathbb{R}^m$. Consequently, the FNN in Fig.~\ref{fig:FNN2} with output weights cannot satisfy the local observability rank condition. Moreover, the linear dependence presented above implies that for multi-layer FNNs in general, the local observability rank condition fails to hold.

 This result implies that the use of state estimation-based methods for training FNNs with these characteristics is, in general, not well justified. However, such training has been successfully performed, e.g.,  in~\cite{bonassi2022towards,bemporad2022recurrent,singhal1988training}. Hence, novel techniques and alternative (weaker) observability conditions are required to guarantee the convergence of such algorithms for multi-layer FNNs, which will be studied in future work.
	
In Section~\ref{2-layer FNN}, the local observability analysis of a certain FNN configuration shows that the states of its reformulated dynamical system are locally observable under specific conditions. In the following section, we will exploit this property to design and analyze a state estimation-based training method.
 
\section{MHE-based FNN training}\label{section-MHE}
In this section, we first introduce the proposed MHE-based training method for regression FNNs, which tunes network parameters by estimating the optimal states of system~(\ref{2-eq-FNN problem}). Subsequently, we establish the convergence guarantees for the state estimation error.
\subsection{MHE problem formulation}
MHE is a state estimation technique that estimates the states of dynamical systems by solving an optimization problem over a moving window of the most recent input–output measurements. Here, we use it to train an FNN by estimating the optimal states of its reformulated dynamical system. According to the state propagation law in~(\ref{2-eq-FNN state}), the optimal state is static because the ideal weights of an FNN are constant. Consequently,  at time step $t\in\mathbb{Z}_{\geq 1}$, an optimal estimate can be obtained by solving the MHE problem using any batch of temporally independent input-output data and the previous optimal estimate $\hat{w}_{t-1}$. The current optimal state estimate $\hat{w}_t$ is obtained by solving the following optimization problem

\begin{subequations}\label{MHE problem}
	\begin{align}
	\min_{\bar{w}_{t}}\quad &V_t(\bar{w}_t)=\sum_{j\in \mathbb{Z}_t}\norm{\bar{y}_{j}-\tilde{y}_{j}}_2^2\label{MHE-c-cost}\\
	s.t. 			\quad	&\bar{y}_{j}=h(\bar{w}_{ t},u_j),~j\in \mathbb{Z}_{t},\label{MHE-b}\\
	& P_{\bar{o},t} (\bar{w}_t-\hat{w}_{t-1})=0,\label{MHE-c-unobservable}\\
	&\bar{w}_{t}\in \mathbb{W}_0\label{MHE-d}.
	\end{align}
\end{subequations}Here, $\hat{w}_{t-1}\in\mathbb{R}^{mn}$ denotes the optimal state estimate (obtained by solving (\ref{MHE problem})) at time  $t-1$. In case that the optimizer is not unique, an arbitrary minimizer can be selected. Furthermore,  $\tilde{y}_{j}$ is the output in the training dataset  corresponding to input $u_{j}$, and can be written as $\tilde{y}_{j}=y_j+\epsilon_{j}$, where $y_j$ is the output of the FNN under the \emph{ideal} state $w$ and input $u_j$, and $\epsilon_{j}$ is the bounded fitting error. Note that the following results hold independently of how exactly the \emph{ideal} weights are defined; these could, e.g., be hypothetical weights that minimize the fitting error over the whole domain of interest, or those that minimize the fitting error over the training data set.  The matrix $P_{\bar{o},t}$ denotes the projection   onto the unobservable subspace of system~(\ref{2-eq-FNN problem}), defined later in~(\ref{projection-matrix-un}). At $t=1$, let $\hat{w}_0=w_0$, where $w_0$ is the initial state. 

Let the training dataset be $\mathcal{D}_{tr}=\{(u_i,\tilde{y}_i)\}_{i=1}^{N}$, where $N\geq mn$. This lower bound corresponds to the minimum data requirement that allows local observability as discussed in Remark~\ref{rem-lower bound}.  Moreover, consider an integer $N_1 \leq N$ that represents the number of data points that is used to solve problem~(\ref{MHE problem}) at each time instant. Hence, the set $\mathbb{Z}_t\subset\mathbb{Z}_{[1,N]}$ in~(\ref{MHE-b}) contains $N_1$ integers of $\mathbb{Z}_{[1,N]}$.  Note that one could take all available training data into account and solve~(\ref{MHE problem}) once with $\mathbb{Z}_t$ replaced by $\mathbb{Z}_{[1,N]}$. However, this is generally not tractable since $N$ can be very large. Hence, at each time $t$, we use a mini-batch $\tilde{\mathcal{D}}_{t}=\{(u_j,\tilde{y}_j)\}, j\in\mathbb{Z}_t$, of size $N_1 < N$ for solving problem~(\ref{MHE problem}). This optimization setting is inspired by MHE, where finite estimation horizons are used. 
However, system (\ref{2-eq-FNN problem}) is static and its output depends only on the current input and state. Consequently, the order of the input-output sequence in $\tilde{\mathcal{D}}_{t}$ is not critical for the optimization at each step.  This differs from the classical MHE formulations, where sequences are temporally generated and present the dynamics flow of the system, cf. \cite{schiller2023lyapunov}. For the training dataset $\mathcal{D}_{tr}$ and  mini-batch dataset $\tilde{\mathcal{D}}_{t}$, we have the following assumption.

\begin{assumption}\label{assumption-coPE}
	The initial state $w_0$ in~(\ref{MHE problem}) is locally observable under the training dataset $\mathcal{D}_{tr}=\{(u_i,\tilde{y}_i)\}_{i=1}^{N}$ with a locally observable neighborhood $\mathbb{W}_0$. Moreover, $\mathcal{D}_{tr}$ is partitioned  into $k$ non-overlapping mini-batches with $N_1$ samples each. These mini-batches are used periodically to solve~(\ref{MHE problem}).  
\end{assumption}

A training epoch is completed after all mini-batches have been processed exactly once. The selection then returns to the first mini-batch whenever $t=\eta k+1$ for any integer $\eta\geq 0$, leading to a periodic schedule.  Moreover,  under Assumption~\ref{assumption-coPE}, the aggregate input sequence formed by concatenating all mini-batches is persistently exciting for the state $w_0$.  However, since $\tilde{\mathcal{D}}_{t}$ is only a subset of $\mathcal{D}_{tr}$, and not necessarily persistently exciting for $w_0$, satisfaction of the observability rank condition at state $w_0$ is not guaranteed for each optimization step. In the following subsection, we take this fact into account and establish the convergence of the MHE problem with~(\ref{MHE-c-unobservable}).

\subsection{Convergence analysis}
In the following, we  first characterize  the mapping between state and output differences using the mean value theorem, and derive the projection matrices onto the observable and unobservable subspaces. These projections are then used to analyze the MHE-based training framework.

\subsubsection{Relationship between states and outputs}
Based on the mean value theorem \cite{abraham2012manifolds}, for any two states $w_1 ,w_2\in \mathbb{W}_0$, where $\mathbb{W}_0$ is a locally observable neighborhood of $w_0$ defined in Definition~\ref{def-observable set}, the corresponding output difference $y_1-y_2$ under input $u_1$ satisfies
\begin{align}\label{eq-mean-1}
	y_1\!-\!y_2\!=\!\int_{0}^{1}\frac{\partial h}{\partial w} (\tau w_1+(1-\tau )w_2,u_1) \mathrm{d}\tau (w_1-w_2).
\end{align}Since $w_1 ,w_2\in \mathbb{W}_0$ and $\mathbb{W}_0$ is a convex set by Proposition~\ref{prop-convex}, the convex combination $(\tau w_1+(1-\tau)w_2)$ also belongs to $\mathbb{W}_0$ for any $\tau\in[0,1]$. As shown in the proof of  Theorem~\ref{th-observable set}, for any state $w^{\prime}$ in $\mathbb{W}_0$, the corresponding entries in $UW_0$ and $UW^{\prime}$ share the same signs under the same persistently exciting input sequence. Consequently, we have $\chi(UW_0)=\chi(UW^{\prime})$, which implies that all states in $\mathbb{W}_0$ share a common indicator matrix $T_{\chi}=\chi(UW_0)$ as defined in~(\ref{2-eq-dH-factorization}) (compare~\ref{def-Tchi}). Since $\frac{\partial h}{\partial w}(w^{\prime},u_1)$ can be expressed in terms of the input $u_1$ and a row of the indicator vector $T_{\chi}$ (see (\ref{2-eq-derivative})), its value is identical for all states in $\mathbb{W}_0$ under the same input. We therefore denote this common derivative  under $u_1$ by $\nabla h_1$. Combining the above results with~(\ref{eq-mean-1}), we have
\begin{align}
	y_1-y_2=\nabla h_1(w_1-w_2).
\end{align}

For the mini-batch   $\tilde{\mathcal{D}}_{t}=\{(u_j,\tilde{y}_j)\}, j\in\mathbb{Z}_t$ employed at time $t$, we denote the input sequence by $u_{[1,N_1],t}$. The output sequences generated by the FNN subject to $u_{[1,N_1],t}$ for states $w_1$ and $w_2$ are $\{h(w_1,u_{1,t}), \ldots, h(w_1,u_{N_1,t})\}$ and $\{h(w_2,u_{1,t}), \ldots, h(w_2,u_{N_1,t})\}$, respectively. The output differences can be expressed as
\begin{align}\label{eq-state output}
	\underbrace{\begin{bmatrix}
			h(w_1,u_{1,t})-h(w_2,u_{1,t})\\\vdots \\ h(w_1,u_{N_1,t})-h(w_2,u_{N_1,t})
	\end{bmatrix}}_{\Delta y_{[1,N_1],t}}
	=\underbrace{\begin{bmatrix}
			\nabla h_1\\ \vdots \\ \nabla h_{N_1}
	\end{bmatrix}}_{ H_t}(w_1-w_2),
\end{align}where the subscript $t$ denotes the time step when employing the data batch $\tilde{\mathcal{D}}_{t}$ for solving the optimization problem, and $H_t$ is the Jacobian matrix derived as shown in~(\ref{2-eq-H-derivative}). Consequently, when the input sequence $u_{[1,N_1],t}$ is persistently exciting for the initial state $w_0$, the matrix $H_t\in\mathbb{R}^{N_1\times mn}$ has full column rank. Furthermore, since all state pairs in the set $\mathbb{W}_0$ share the same $H_t$, their corresponding output sequences are mutually distinguishable. However, if the input sequence of the current mini-batch is not persistently exciting,  distinguishability is still maintained for the projection of these states onto the observable subspace. 

Note that the unobservable subspace of system~(\ref{2-eq-FNN problem}) under the data batch $\tilde{\mathcal{D}}_{t}$ is the null space of $H_t$, denoted by $\mathcal{N}(H_t)$. This follows because if  the state difference lies in the null space of $H_t$, the output difference is zero, rendering  the states indistinguishable. The orthogonal projection matrix onto the unobservable subspace at time $t$ is given by
\begin{align}
 P_{\bar{o},t}=I-H_t^{\dagger}H_t^{\top},\label{projection-matrix-un}
\end{align}which is used in the constraint~(\ref{MHE-c-unobservable}). The observable subspace is the row space of $H_t$, denoted by $\mathcal{R}(H_t)$, with the orthogonal projection matrix given by
\begin{align}
	 P_{o,t}=H_t^{\dagger}H_t^{\top}.\label{projection-matrix-o}
\end{align} 

 Using $P_{\bar{o},t}$ and $P_{o,t}$, we decompose the state at time step $t$ into unobservable and observable components. For a mini-batch $\tilde{\mathcal{D}}_{t}$ that lacks persistent excitation for the states in $\mathbb{W}_0$, the current data batch provides no effective information for the estimation of the unobservable components.  Constraint~(\ref{MHE-c-unobservable}) is designed to fix the projection onto the unobservable subspace. Consequently, at each optimization step, only the observable components are updated, which directly influence the value of the cost function. 
\subsubsection{Convergence of MHE}
The proposed MHE-based training scheme~(\ref{MHE problem}) estimates the ideal FNN weights by solving an optimization problem at each time step $t$ using the data batch $\tilde{\mathcal{D}}_{t}$. Naturally, convergence to the ideal weights cannot be expected from arbitrary initial conditions, as observability is only guaranteed locally. Nevertheless, local convergence can be shown if the initial weights are close enough to the ideal weights, as formalized in the following assumption.
	\begin{assumption}\label{assum-true state}
		The ideal state $w$ of the reformulated system~(\ref{2-eq-FNN problem}) satisfies $w\in\mathbb{W}_0$.
	\end{assumption}

\begin{theorem}\label{th5}
	Let Assumption~\ref{assum-true state} hold. Then, for all $t\in \mathbb{Z}_{\geq 1}$, the state estimate $\hat{w}_t$ that minimizes~(\ref{MHE problem}) satisfies
	\begin{align}
		\norm{P_{o,t}(\hat{w}_t-w)}_2\leq \max_{i\in\mathbb{Z}_{t}}2{\abs{\epsilon_{i}}}\sqrt{N_1}/\sigma_t ,\label{constrian-ob}
	\end{align}where $\sigma_t$ is the minimal nonzero singular value of $H_t$.
\end{theorem}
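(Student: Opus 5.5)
The plan is to compare the optimal cost $V_t(\hat w_t)$ with the cost of a feasible comparison point built from the ideal state $w$. That point has to satisfy the unobservable-subspace constraint (\ref{MHE-c-unobservable}), so it cannot simply be $w$. I would take
\begin{align*}
\check w_t \coloneqq P_{o,t} w + P_{\bar o,t}\hat w_{t-1}.
\end{align*}
Then $P_{\bar o,t}(\check w_t-\hat w_{t-1})=0$, since $P_{\bar o,t}$ is an orthogonal projection with $P_{\bar o,t}P_{o,t}=0$.

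Feasibility with respect to (\ref{MHE-d}) needs an argument. I would try to show $\check w_t\in\mathbb{W}_0$ using Assumption~\ref{assum-true state}, the fact that $\hat w_{t-1}\in\mathbb{W}_0$ (induction, with $\hat w_0=w_0\in\mathbb{W}_0$), and convexity of $\mathbb{W}_0$ from Proposition~\ref{prop-convex}. This is the step I expect to be the main obstacle. In general $\check w_t$ is not a convex combination of $w$ and $\hat w_{t-1}$. If this fails, the fallback is to note that the bound only uses $H_t\check w_t=H_t w$. Any feasible point with that property works, for example one obtained by moving from $\hat w_{t-1}$ toward $w$ only along $\mathcal R(H_t)$, provided it stays in $\mathbb{W}_0$. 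Otherwise I would add this feasibility as an implicit standing requirement.

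Granting feasibility, all points of $\mathbb{W}_0$ share the Jacobian $H_t$, so by (\ref{eq-state output}),
\begin{align*}
h(\check w_t,u_j)-y_j = \nabla h_j(\check w_t - w) = \nabla h_j P_{\bar o,t}(\hat w_{t-1}-w)=0,
\end{align*}
because $P_{\bar o,t}$ projects onto $\mathcal N(H_t)$. Hence
\begin{align*}
V_t(\check w_t)=\sum_{j\in\mathbb{Z}_t}\epsilon_j^2\le N_1\max_{i\in\mathbb{Z}_t}\epsilon_i^2.
\end{align*}
Optimality of $\hat w_t$ gives $V_t(\hat w_t)\le V_t(\check w_t)$. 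Stacking residuals, $\sqrt{V_t(\hat w_t)}=\norm{H_t(\hat w_t-w)-\epsilon_{t}}_2$, where $\epsilon_t$ is the vector of $\epsilon_j$, $j\in\mathbb{Z}_t$. Here I again use $\hat w_t,w\in\mathbb{W}_0$ and the exact mean-value identity.

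By the triangle inequality,
\begin{align*}
\norm{H_t(\hat w_t-w)}_2\le \sqrt{V_t(\hat w_t)}+\norm{\epsilon_t}_2\le 2\sqrt{N_1}\max_{i\in\mathbb{Z}_t}\abs{\epsilon_i}.
\end{align*}
Finally, $H_t(\hat w_t-w)=H_tP_{o,t}(\hat w_t-w)$, and $P_{o,t}(\hat w_t-w)\in\mathcal R(H_t)=\mathcal N(H_t)^\perp$. On this subspace $\norm{H_t x}_2\ge\sigma_t\norm{x}_2$, where $\sigma_t$ is the smallest nonzero singular value of $H_t$. Dividing by $\sigma_t$ yields (\ref{constrian-ob}). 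One routine check along the way: the projection formulas (\ref{projection-matrix-un})–(\ref{projection-matrix-o}) should be read as the orthogonal projectors $H_t^\dagger H_t$ and $I-H_t^\dagger H_t$.
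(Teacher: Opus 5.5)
Your argument is the paper's own: the same comparison point $\check w_t=P_{o,t}w+P_{\bar o,t}\hat w_{t-1}$, the same optimality comparison, and the same triangle-inequality and singular-value steps. Your reading of the projectors as $H_t^\dagger H_t$ and $\mathrm{I}-H_t^\dagger H_t$ is also the right one.

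The step you flag is a genuine gap, and the paper does not close it either. The paper derives feasibility of $\check w_t$ from Assumption~\ref{assum-true state} and convexity of $\mathbb{W}_0$. But, as you observe, $\check w_t=\hat w_{t-1}+P_{o,t}(w-\hat w_{t-1})$ is not a convex combination of points of $\mathbb{W}_0$, and it can in fact leave $\mathbb{W}_0$. Take $m=n=2$, $W_0=\mathrm{I}_2$, and training inputs $u_1=(1,2)$, $u_2=(2,1)$, $u_3=(1,-2)$, $u_4=(2,-1)$. These are persistently exciting by Theorem~\ref{th-PEU} with $T=\begin{bsmallmatrix}1&1\\1&0\end{bsmallmatrix}$. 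Use mini-batches $\{1,2\}$ and $\{3,4\}$, and the ideal state $w=\mathrm{vec}\bigl(\begin{bsmallmatrix}0.1&0\\0&5\end{bsmallmatrix}\bigr)\in\mathbb{W}_0$. Then $\mathcal{R}(H_1)=\{(a,b,a,b):a,b\in\mathbb{R}\}$ and $\check w_1=(0.55,2,-0.45,3)$. Its first hidden unit gives $u_3^\top(0.55,2)^\top<0$, whereas $u_3^\top e_1=1>0$, so $\check w_1\notin\mathbb{W}_0$. Your fallback does not help. The conditions $P_{\bar o,t}(\tilde w-\hat w_{t-1})=0$ and $H_t\tilde w=H_tw$ together force $\tilde w=\check w_t$, because $H_t$ is injective on $\mathcal{R}(H_t)$.

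The gap can nevertheless be closed without a new hypothesis, by making the comparison on an affine set rather than on the feasible set.

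Let $g(\bar w)\coloneqq\norm{H_t(\bar w-w)-\varepsilon_t}_2^2$. This is a convex quadratic on all of $\mathbb{R}^{mn}$, and it coincides with $V_t$ on $\mathbb{W}_0$: the activation pattern on the training inputs is frozen there, and $w\in\mathbb{W}_0$.

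All entries of $UW_0$ are nonzero; the PE construction guarantees this, and it is needed for the Jacobian to exist. Hence $\mathbb{W}_0$ is a finite intersection of open half-spaces. So the feasible set of (\ref{MHE problem}) is relatively open in the affine set $A_t\coloneqq\hat w_{t-1}+\mathcal{R}(H_t)$.

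A minimizer $\hat w_t$ is therefore a local, and by convexity a global, minimizer of $g$ on $A_t$. Since $\check w_t\in A_t$ whether or not it lies in $\mathbb{W}_0$, you get $V_t(\hat w_t)=g(\hat w_t)\le g(\check w_t)=\norm{\varepsilon_t}_2^2$. The rest of your argument then goes through verbatim.

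Solving the least-squares problem on $A_t$ explicitly gives more. It yields $\hat w_t=\check w_t+H_t^\dagger\varepsilon_t$, hence $P_{o,t}(\hat w_t-w)=H_t^\dagger\varepsilon_t$, so the bound holds without the factor $2$. It also shows that (\ref{MHE problem}) has a minimizer only if $\check w_t+H_t^\dagger\varepsilon_t\in\mathbb{W}_0$. With noise-free data, your proposed standing requirement is therefore exactly the existence of $\hat w_t$ that the statement tacitly presupposes. In the example above with noise-free data, no minimizer exists at $t=1$.
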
	
\begin{proof}
Since $ \mathcal{R}(H_t)\oplus \mathcal{N}(H_t)=\mathbb{R}^n$, $\hat{w}_t$ can be expressed as $\hat{w}_t=P_{\bar{o},t} \hat{w}_{t}+P_{o,t} \hat{w}_t$. By~(\ref{eq-state output}), the optimal cost function~(\ref{MHE-c-cost}) can be represented as $V_t^*(\hat{w}_t)=\sum_{j\in \mathbb{Z}_t}\norm{\hat{y}_{j}-\tilde{y}_{j}}_2^2=\norm{H_t(\hat{w}_t-w)-\varepsilon_t}_2^2=\norm{H_tP_{o,t}(\hat{w}_t-w)-\varepsilon_t}_2^2$, where $\varepsilon_t\in\mathbb{R}^{N_1}$ is a vector whose entries are the fitting errors $\{\epsilon_{j}\},j\in\mathbb{Z}_t$. The last equality holds since the unobservable component $P_{\bar{o},t} (\hat{w}_{t}-w)$ lies in the null space of $H_t$. By Assumption~\ref{assum-true state}, we have $w\in\mathbb{W}_0$. Together with the convexity of $\mathbb{W}_0$, it follows that $w^{\prime}=P_{\bar{o},t} \hat{w}_{t-1}+P_{o,t} w$ satisfies all constraints in~(\ref{MHE problem}) and is a feasible solution. Hence, $ V_t^*(\hat{w}_t)\leq V_t(w^{\prime})$ by optimality.  Taking square roots of both sides, we obtain $
	\norm{H_tP_{o,t}(\hat{w}_t-w)-\varepsilon_t}_2\leq\norm{\varepsilon_t}_2$, where the cost of $w^{\prime}$ is $V_t(w^{\prime})=\norm{\varepsilon_t}_2^2$. Using the fact that $
	\norm{H_tP_{o,t}(\hat{w}_t-w)}_2-\norm{\varepsilon_t}_2\leq \norm{H_tP_{o,t}(\hat{w}_t-w)-\varepsilon_t}_2$,  we have
	\begin{align}\label{HPbound1}
		 \norm{H_tP_{o,t}(\hat{w}_t-w)}_2\leq 2\norm{\varepsilon_t}_2.
	\end{align}  Suppose $\mathrm{rank}(H_t)=r$. Since $P_{o,t}(\hat{w}_t-w)\in\mathcal{R}(H_t)$, there exists a vector $z_t\in\mathbb{R}^r$, such that $P_{o,t}(\hat{w}_t-w)=Q_tz_t$, where $Q_t\in \mathbb{R}^{n \times r}$ is the matrix of right singular vectors of $H_t$ from its singular value decomposition $H_t=R_t\Sigma_t Q_t^{\top}$, and $\Sigma_t$ contains the $r$ nonzero singular values of $H_t$. Since $R_t$ and $Q_t$ are semi-unitary matrices, we have $\norm{H_tP_{o,t}(\hat{w}_t-w)}_2=\norm{R_t\Sigma_tQ_t^{\top}Q_tz_t}_2=\norm{\Sigma_tz_t}_2$. Combining this result with~(\ref{HPbound1}), we have
	\begin{align}
		\sigma_t\norm{z_t}_2\leq\norm{\Sigma_tz_t}_2\leq2\norm{\varepsilon_t}_2,
	\end{align}where $\sigma_t$ is the minimal nonzero singular value of $H_t$. Since $Q_t$ is a semi-unitary matrix, it follows that $\norm{P_{o,t}(\hat{w}_t-w)}_2=\norm{z_t}_2$. Using the fact that $\norm{\varepsilon_t}_2\leq \max_{i\in\mathbb{Z}_{t}}\abs{\epsilon_{i}}\sqrt{N_1}$, we obtain~(\ref{constrian-ob}) and complete the proof.
\end{proof}

 Since the input sequence in the training dataset $\mathcal{D}_{tr}$ is persistently exciting for states in $\mathbb{W}_0$, the Jacobian matrix $\mathrm{D}\mathscr{H}_N(w)$ for the ideal state (and every other state in $\mathbb{W}_0$) has full column rank. This implies that the corresponding unobservable subspace under $\mathcal{D}_{tr}$ is empty, with its projection matrix denoted by $P_{\bar{o}}=\boldsymbol{0}$.  For each $t\in \mathbb{Z}_{\geq 1}$ and the mini-batch $\tilde{\mathcal{D}}_{t}$, there exists a corresponding $H_t$ and a pair of projection matrices $P_{\bar{o},t}$ and $P_{o,t}$.  From Assumption~\ref{assumption-coPE}, each $H_t$ is a row submatrix of $\mathrm{D}\mathscr{H}_N(w)$, and consequently for $t\geq k$, we have $\mathcal{R}(H_1)+\mathcal{R}(H_2)+\cdots+\mathcal{R}(H_{t})=\mathcal{R}(\mathrm{D}\mathscr{H}_N(w))$. Since $ \mathcal{R}(H_t)\oplus \mathcal{N}(H_t)=\mathbb{R}^n$, applying De Morgan's law for orthogonal complements yields $\mathcal{N}(\mathrm{D}\mathscr{H}_N(w))=\bigcap_{\tau=1}^{t} \mathcal{N}(H_{\tau})=0$.

We next show that consecutive projections of a state onto the null spaces of $H_t$ converge in norm to the projection onto the intersection of these subspaces. This result follows directly  from the following lemma about the alternating projection theory. 

\begin{lemma}[\!\!{\cite[Th. 1]{sakai1995strong}}]\label{lemma-alternating projection}
	Let $\mathcal{H}_{j}$, $j\in\{1,\cdots,J\}$ be a finite number of closed linear subspaces of a Hilbert space, and let $P_j$ be the corresponding orthogonal projection matrix onto $\mathcal{H}_{j}$.  Let $s=(j_i)_{i\geq1}, j_i\in \{1,\cdots,J\}$, be an infinite integer sequence. If $s$ is quasi-periodic, then for an element $x_0$, the consecutive projection $\Pi_{i=1}^{\infty}P_{j_i}$ converges in norm to the orthogonal projection of $x_0$ onto the intersection of these subspaces $\bigcap_{j=1}^J\mathcal{H}_{j}$.
\end{lemma}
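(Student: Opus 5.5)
Here ``quasi-periodic'' means that there is an integer $K$ such that every index $j\in\{1,\dots,J\}$ occurs in every block of $K$ consecutive terms of $s$. Since Lemma~\ref{lemma-alternating projection} is applied in this paper only in $\mathbb{R}^{mn}$, I plan to prove it in full in finite dimensions and indicate how to extend it to a general Hilbert space.

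\textbf{Step 1: reduction to $\mathcal{M}^{\perp}$.} Let $\mathcal{M}\coloneqq\bigcap_{j=1}^J\mathcal{H}_j$ and let $P_{\mathcal{M}}$ be the orthogonal projection onto $\mathcal{M}$. Since $\mathcal{M}\subseteq\mathcal{H}_j$, each $P_j$ fixes $\mathcal{M}$ pointwise. Each $P_j$ also maps $\mathcal{M}^{\perp}$ into itself: for $z\in\mathcal{M}^\perp$ and $m\in\mathcal{M}$, we have $\langle P_jz,m\rangle=\langle z,P_jm\rangle=\langle z,m\rangle=0$. Hence $P_jP_{\mathcal{M}}=P_{\mathcal{M}}P_j$. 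Writing $x_0=P_{\mathcal{M}}x_0+z_0$ with $z_0\in\mathcal{M}^\perp$, the partial products satisfy $x_i\coloneqq P_{j_i}\cdots P_{j_1}x_0=P_{\mathcal{M}}x_0+P_{j_i}\cdots P_{j_1}z_0$. It therefore suffices to show that the iterates starting from $z_0$ tend to $0$ in norm. Because every $P_j$ is a contraction, $\norm{x_i-P_{\mathcal{M}}x_0}$ is nonincreasing in $i$. So convergence along the subsequence of block endpoints $i=K,2K,\dots$ implies convergence of the whole sequence.

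\textbf{Step 2: strict contraction over one block (finite dimensions).} Fix any word $P_{j_r}\cdots P_{j_1}$ in which every index appears, and let $z\in\mathcal{M}^\perp$ with $\norm{P_{j_r}\cdots P_{j_1}z}=\norm{z}$. Each factor has norm at most one, so each factor must act isometrically on its argument. Since $\norm{Pv}=\norm{v}$ forces $Pv=v$ for an orthogonal projection, induction along the word gives $z\in\mathcal{H}_{j_1}$, then $z=P_{j_1}z\in\mathcal{H}_{j_2}$, and so on. Thus $z$ lies in every $\mathcal{H}_j$, so $z\in\mathcal{M}\cap\mathcal{M}^\perp=\{0\}$. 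The unit sphere of $\mathcal{M}^\perp$ is compact and $z\mapsto\norm{P_{j_r}\cdots P_{j_1}z}$ is continuous, so the norm of this word restricted to $\mathcal{M}^\perp$ is some $c_w<1$. There are only finitely many words of length $K$, so $c\coloneqq\max_w c_w<1$. Quasi-periodicity guarantees that each block of $K$ consecutive projections is such a word. Hence $\norm{x_{\eta K}-P_{\mathcal{M}}x_0}\le c^{\eta}\norm{z_0}\to0$, and Step~1 completes the proof, with geometric rate.

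\textbf{Step 3: general Hilbert space (main obstacle).} In infinite dimensions the constant $c$ may equal $1$, because the subspaces can have zero angle, so Step~2 fails. I would instead use the identity $\norm{x_{i-1}}^2-\norm{x_i}^2=\norm{x_{i-1}-x_i}^2$, valid since $x_i=P_{j_i}x_{i-1}$. Summing gives $\sum_i\norm{x_{i-1}-x_i}^2<\infty$. Within a block of length $K$ all iterates are then mutually close, and each iterate lies in the subspace it was projected onto. Consequently any weak cluster point lies in every $\mathcal{H}_j$, hence in $\mathcal{M}$, and together with $x_i-P_{\mathcal{M}}x_0\in\mathcal{M}^\perp$ it must equal $P_{\mathcal{M}}x_0$; so $x_i\rightharpoonup P_{\mathcal{M}}x_0$. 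Upgrading weak convergence to norm convergence is the delicate part. For that I would follow Sakai's argument, which exploits quasi-periodicity via an Amemiya--Ando type estimate on $\langle x_i,x_{i'}\rangle$ for $i<i'$. For the use in this paper, the finite-dimensional Steps~1--2 are sufficient.
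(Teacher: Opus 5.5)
Your proof is correct in the finite-dimensional setting where the paper actually uses the lemma. It takes a different route from the paper, which gives no argument and simply imports Sakai's Theorem~1 for general Hilbert spaces.

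Your Steps~1--2 are a self-contained compactness argument:
\begin{itemize}
\item After splitting off $\mathcal{M}=\bigcap_j\mathcal{H}_j$, equality in the chain of contractions forces any norm-preserving vector of a word containing every index into $\mathcal{M}\cap\mathcal{M}^\perp=\{0\}$.
\item Hence each such word has norm strictly below $1$ on $\mathcal{M}^\perp$.
\item There are finitely many length-$K$ words, so you get a uniform $c<1$.
\end{itemize}
This gives more than the cited statement: operator-norm convergence at a geometric rate. In the paper's application the space is $\mathbb{R}^{mn}$, the schedule is periodic, and $\mathcal{M}=\bigcap_\tau\mathcal{N}(H_\tau)=\{0\}$. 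There your argument gives $\norm{Q}_2<1$ directly for the one-epoch product $Q$ in the proof of Theorem~\ref{th-MHE}. Hence $\norm{Q^\ell}_2\le\norm{Q}_2^\ell$ holds with $\zeta=1$ and $\rho=\norm{Q}_2$. This avoids the detour through $Q^\eta\to 0$, the spectral radius and Gelfand's formula.

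What the citation buys is the infinite-dimensional case, and there your proposal is not a proof. Step~3 correctly yields weak convergence, using bounded gaps, $\sum_i\norm{x_{i-1}-x_i}^2<\infty$ and weak closedness of the $\mathcal{H}_j$. However, the upgrade to norm convergence is deferred to Sakai, i.e., to the very result being stated. This is not a technicality. In infinite dimensions, products of finitely many orthogonal projections taken in a non-quasi-periodic order can fail to converge in norm (Paszkiewicz; Kopeck\'a and M\"uller), even though weak convergence persists (Amemiya--Ando). So quasi-periodicity has to be exploited essentially in exactly the step you leave open. For the lemma as stated for a general Hilbert space, you therefore rely on the same citation as the paper.
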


Recalling that the training dataset is partitioned into a finite number of mini-batches, each processed once per epoch, the input-output pairs in $\tilde{\mathcal{D}}_t$ are used periodically as $t$ approaches infinity. Define $\mu\coloneqq\max_{\tau\in\mathbb{Z}_{[1,k]}}2\sqrt{N_1}/\sigma_{\tau}$, where $k$ is the number of distinct mini-batches and $\sigma_{\tau}$ is the minimal nonzero singular value of $H_{\tau}$. For any $t\in\mathbb{Z}_{\geq k}$, it holds that $\mu=\max_{\tau\in\mathbb{Z}_{[1,t]}}2\sqrt{N_1}/\sigma_{\tau}$. 

As discussed above, since the projection onto each unobservable subspace is employed periodically, Lemma~\ref{lemma-alternating projection} implies that $\lim\limits_{t\rightarrow\infty}\norm{P_{\bar{o}}w_0-\Pi_{\tau=1}^{t}P_{\bar{o},\tau}w_0}=0$. Based on this result, the following theorem shows that under the proposed MHE-based training method and the mini-batch dataset generation scheme,  the state estimation error is guaranteed to converge to a neighborhood of zero, the size of which vanishes as  the bound of measurement noise tends to zero.

\begin{theorem}\label{th-MHE}
	Consider the FNN defined in (\ref{2-eq-FNN problem}) with weight matrix $W$, a persistently exciting input sequence $u_{[1,N]}$, and training dataset $\mathcal{D}_{tr}$. Let Assumptions~\ref{assumption-coPE} and~\ref{assum-true state} hold. Then, the weight estimation error under the MHE-based method~(\ref{MHE problem}) satisfies:
	\begin{align}\label{th-error bound}
		\limsup\limits_{t\rightarrow\infty}\norm{w-\hat{w}_{t}}_2\leq\frac{k\mu \zeta}{1-\rho}\max_{i\in\mathbb{Z}_{[1,N]}}\abs{\epsilon_{i}} ,
	\end{align}where  $\zeta\geq 1$ and $\rho\in[0,1)$ are some constants.
\end{theorem}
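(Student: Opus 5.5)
The plan is to track the error $e_t \coloneqq \hat{w}_t - w$ and split it at each step into its unobservable and observable components with respect to the current mini-batch, using $I = P_{\bar{o},t} + P_{o,t}$.

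\textbf{Step 1: one-step recursion.} Constraint~(\ref{MHE-c-unobservable}) gives $P_{\bar{o},t}\hat{w}_t = P_{\bar{o},t}\hat{w}_{t-1}$. Hence
\begin{align*}
e_t = P_{\bar{o},t} e_{t-1} + P_{o,t} e_t .
\end{align*}
By Theorem~\ref{th5}, the second term satisfies $\norm{P_{o,t}e_t}_2 \le \mu \bar\epsilon$, where $\bar\epsilon \coloneqq \max_{i\in\mathbb{Z}_{[1,N]}}\abs{\epsilon_i}$. This uses $\mu \ge 2\sqrt{N_1}/\sigma_t$ for every $t$, since only $k$ distinct mini-batches occur. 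Call this term $d_t$. Unrolling the recursion gives
\begin{align*}
e_t = \Big(\prod_{\tau=1}^{t} P_{\bar{o},\tau}\Big) e_0 + \sum_{s=1}^{t} \Big(\prod_{\tau=s+1}^{t} P_{\bar{o},\tau}\Big) d_s ,
\end{align*}
where products are ordered with later indices on the left.

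\textbf{Step 2: contraction over one epoch.} Let $M \coloneqq P_{\bar{o},k}\cdots P_{\bar{o},1}$ be the product over one epoch. By Assumption~\ref{assumption-coPE} the schedule is periodic, so every block of $k$ consecutive projections is a cyclic shift of the same $k$ matrices. As argued before Lemma~\ref{lemma-alternating projection}, $\bigcap_\tau \mathcal{N}(H_\tau) = \{0\}$. I would then prove $\rho_0 \coloneqq \max_{\text{cyclic shifts}}\norm{M}_2 < 1$. Suppose $\norm{Mx}_2 = \norm{x}_2$ for a unit vector $x$. Each orthogonal projection is nonexpansive, so every intermediate projection must preserve the norm. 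Hence $x$ lies in each $\mathcal{N}(H_\tau)$ in turn, which forces $x = 0$, a contradiction. Compactness of the unit sphere then gives $\norm{M}_2 < 1$. Lemma~\ref{lemma-alternating projection} can be cited to confirm that the first term tends to $P_{\bar{o}}w_0$-type limits, i.e. to $0$. The quantitative rate, however, comes from $\rho_0$.

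\textbf{Step 3: summing the perturbations.} I set $\rho \coloneqq \rho_0$. Write $s = t - qk - r$ with $0 \le r < k$. Then
\begin{align*}
\Big\| \prod_{\tau=s+1}^{t} P_{\bar{o},\tau} \Big\|_2 \le \rho^{\lfloor (t-s)/k \rfloor}.
\end{align*}
For each value of $q$ there are at most $k$ indices $s$, so the perturbation sum is bounded by
\begin{align*}
\sum_{q\ge 0} k \rho^{q} \mu \bar\epsilon = \frac{k\mu\bar\epsilon}{1-\rho}.
\end{align*}
The homogeneous term is at most $\rho^{\lfloor t/k\rfloor}\norm{e_0}_2 \to 0$. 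Taking $\limsup$ yields~(\ref{th-error bound}) with $\zeta = 1$. A constant $\zeta \ge 1$ leaves slack, for instance to absorb a non-aligned first partial epoch or to state the bound for a general norm-equivalent contraction estimate.

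\textbf{Main obstacle.} The main difficulty is Step 2: getting a uniform strict contraction from the periodic products, rather than mere convergence as in Lemma~\ref{lemma-alternating projection}. The norm-equality argument above handles it in finite dimensions. I must also check that the iterates stay in $\mathbb{W}_0$, so that Theorem~\ref{th5} applies at every $t$. This is guaranteed by constraint~(\ref{MHE-d}) together with Assumption~\ref{assum-true state}.
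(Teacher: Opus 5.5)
Your proof is correct, and its skeleton matches the paper's. You use the same one-step recursion from constraint~(\ref{MHE-c-unobservable}) and the same unrolled expansion, namely the paper's~(\ref{th5-eq ee}) with its $\alpha_t$ equal to your $-d_t$. You also use Theorem~\ref{th5} and the definition of $\mu$ in the same way, to bound every observable increment by $\mu\max_i\abs{\epsilon_i}$.

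The genuine difference is how you get geometric decay of the products of unobservable-subspace projections. The paper uses Lemma~\ref{lemma-alternating projection} to get $Q^\eta\to 0$ and infers that the spectral radius $\rho$ of $Q$ is below one. It then appeals to Gelfand's formula for $\norm{Q^\ell}_2\le\zeta\rho^\ell$, and treats the case of a PE mini-batch ($Q=0$) separately. You instead show directly that every cyclic shift of the one-epoch product has operator norm strictly below one. Your ingredients are the equality case of nonexpansiveness of orthogonal projections, the trivial intersection $\bigcap_\tau\mathcal{N}(H_\tau)=\{0\}$, and compactness of the unit sphere.

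Your route has several advantages:
\begin{itemize}
\item It is more elementary and self-contained, since neither Sakai's theorem nor Gelfand's formula is needed.
\item It gives the explicit constant $\zeta=1$ and covers $\rho=0$ without a case split.
\item It accounts explicitly for the fact that, for $t=\eta k+c$ with $c\neq 0$, the length-$k$ blocks are cyclic shifts of $Q$ rather than $Q$ itself; the paper passes over this.
\item It avoids a small imprecision in the paper. Gelfand's formula alone only yields $\norm{Q^\ell}_2\le\zeta(\rho+\varepsilon)^\ell$ for arbitrary $\varepsilon>0$, so the paper's bound with exactly the spectral radius needs an extra argument. This is harmless, since the theorem only asserts some $\rho\in[0,1)$.
\end{itemize}

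What the paper's route buys in return is a potentially sharper asymptotic rate. The spectral radius can be much smaller than the norm; for a product of two projections it equals the square of the norm. The price is an unquantified $\zeta$.
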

\begin{proof} Let $\alpha_t\coloneqq P_{o,t}(w-\hat{w}_t)$. Together with constraint~(\ref{MHE-c-unobservable}), we obtain 
\begin{align}
	w-\hat{w}_{t}&=(P_{o,t}+P_{\bar{o},t})(w-\hat{w}_{t})\\
	&=P_{\bar{o},t}(w-\hat{w}_{t})+\alpha_{t}\\
	&=P_{\bar{o},t}w-P_{\bar{o},t}(P_{o,t-1}+P_{\bar{o},t-1})\hat{w}_{t-1}+\alpha_{t}.
\end{align} By employing~(\ref{MHE-c-unobservable}) recursively, we have $w-\hat{w}_{t}=P_{\bar{o},t}w-P_{\bar{o},t}(P_{o,t-1}w-\alpha_{t-1}+P_{\bar{o},t-1}(P_{o,t-2}w-\alpha_{t-2}+\cdots P_{\bar{o},2}(P_{o,1}w-\alpha_1+P_{\bar{o},1}\hat{w}_{0})\cdots))+\alpha_{t}$. Then, it follows that
\begin{align}\label{th5-eq ee}
	 	&w-\hat{w}_{t}\notag\\
	 	&=P_{\bar{o},t}w-P_{\bar{o},t}(P_{o,t-1}w+\cdots P_{\bar{o},2}(P_{o,1}w+P_{\bar{o},1}w)\cdots)\notag\\
	 	&\quad+\alpha_{t}+P_{\bar{o},t}\alpha_{t-1}+\cdots+P_{\bar{o},t}\cdots P_{\bar{o},2}\alpha_{1}\notag\\
	 	&\quad+P_{\bar{o},t}\cdots P_{\bar{o},1}(w-\hat{w}_{0}).
\end{align}
Since $P_{o,i}w+P_{\bar{o},i}w=w$, for all $i\in \mathbb{Z}_{[1,t]}$, the first row in~(\ref{th5-eq ee}) reduces to $P_{\bar{o},t}w-P_{\bar{o},t}w=0$. 
As mentioned above, $P_{\bar{o}}=\boldsymbol{0}$, since the weight matrix $W$ is locally observable under the training dataset $\mathcal{D}_{tr}$ according to Assumption~\ref{assum-true state}.  
Under Assumption~\ref{assumption-coPE}, Lemma~\ref{lemma-alternating projection} ensures that the product of alternating projections converges to the projection onto the intersection of subspaces. Consequently, 
\begin{align}\label{alternating P}
	\lim\limits_{t\rightarrow\infty}\norm{ P_{\bar{o},t}\cdots P_{\bar{o},1}(w-\hat{w}_{0})}_2=\norm{P_{\bar{o}}(w-\hat{w}_{0})}_2=0.
\end{align}

We now turn to the second row of~(\ref{th5-eq ee}) and consider $t=\eta k+c$ with $\eta\geq0$ and $c\in\mathbb{Z}_{[1,k-1]}$. As the MHE problem~(\ref{MHE problem}) is solved periodically over $k$ distinct mini-batches, the projection matrices satisfy $P_{\bar{o},i}=P_{\bar{o},i+kj}$ for $i\in\mathbb{Z}_{[1,k]}, j\in\mathbb{Z}_{[0,\eta]}$.  Define $Q\coloneqq \Pi_{i=1}^{k}P_{\bar{o},i}$. By (\ref{alternating P}), it follows that $\lim\limits_{\eta\rightarrow\infty} \Pi_{\tau=1}^{\eta k} P_{\bar{o},\tau}=\lim\limits_{\eta\rightarrow\infty}Q^{\eta}=0$. Therefore, the spectral radius $\rho$ of $Q$ lies in $\rho\in[0,1)$. Consider the case where every single data batch $\tilde{\mathcal{D}}_t$ is not persistently exciting for $w_0$, resulting in $P_{\bar{o},t}\neq \boldsymbol{0}$ for any $t\in\mathbb{Z}_{\geq 1}$ and $\rho\in(0,1)$.  By Gelfand's formula \cite{horn2012matrix}, there exists a constant $\zeta\geq 1$ such that $\norm{Q^{\ell}}_2\leq \zeta \rho^{\ell}$, for all $\ell\in\mathbb{Z}_{\geq0}$. Based on the results above, we obtain
\begin{align}
	&I+P_{\bar{o},t}+\cdots+P_{\bar{o},t}\cdots P_{\bar{o},2}\notag\\
	&=I+P_{\bar{o},\eta k+c}+\cdots+\Pi_{i=2+c}^{k+c}P_{\bar{o},(\eta-1) k+i}\notag\\
	&\quad+Q+QP_{\bar{o},(\eta-1) k+c}+\cdots+Q\Pi_{i=2+c}^{k+c}P_{\bar{o},(\eta-2) k+i}\notag\\
	&\quad+\cdots\notag\\
	&\quad+Q^{\eta-1}+Q^{\eta-1}P_{\bar{o}, k+c}+\cdots+Q^{\eta-1}\Pi_{i=2+c}^{k+c}P_{\bar{o},i}\notag\\
	&\quad+Q^{\eta}+Q^{\eta}P_{\bar{o}, c}+\cdots+Q^{\eta}\Pi_{i=2}^{c}P_{\bar{o},i},\label{bound2}
\end{align}where each term on the right-hand side corresponds to its equivalent representation on the left-hand side. Taking the  norm of  each term in~(\ref{bound2}) and applying Gelfand's formula yields
\begin{align}\label{residual1}
	&\norm{I}_2+\norm{P_{\bar{o},t}}_2+\cdots+\norm{P_{\bar{o},t}\cdots P_{\bar{o},2}}_2\notag\\
	&\leq k \zeta\frac{1-\rho^{\eta}}{1-\rho}+c\zeta\rho^{\eta},
\end{align}
since $\norm{\Pi_{i=\ell}^{k}P_{\bar{o}, i+jk}}_2\leq 1$ for $\ell\in\mathbb{Z}_{[2,k]}, j\in\mathbb{Z}_{[0,\eta]}$. From Theorem~\ref{th5} and the definition of $\mu$, we have $\norm{\alpha_t}_2\leq\mu\max_{i\in\mathbb{Z}_{[1,N]}}\abs{\epsilon_{i}}$. Combining this bound  with~(\ref{th5-eq ee}), (\ref{alternating P}) and~(\ref{residual1}) yields
\begin{align}\label{th6-bound}
	\limsup\limits_{t\rightarrow\infty}\norm{w-\hat{w}_{t}}_2\leq\frac{k\mu \zeta}{1-\rho}\max_{i\in\mathbb{Z}_{[1,N]}}\abs{\epsilon_{i}}.
\end{align}Now, consider the case where some data batch $\tilde{\mathcal{D}}_t$ is persistently exciting for $w_0$, which implies the corresponding $P_{\bar{o},t}= \boldsymbol{0}$. Hence, $Q=\boldsymbol{0}$ and~(\ref{residual1}) holds with the right-hand side replaced by $k$. Therefore, (\ref{th6-bound}) also holds  for $\rho=0$, which completes the proof.\end{proof}

\begin{remark}
The convergence result in Theorem~\ref{th-MHE}  provides a bound on the difference between the ideal weights $w$ and the weights $\hat{w}_t$ estimated via~(\ref{MHE problem}) in terms of the fitting error $\epsilon$ associated with the ideal weights. Theorem~\ref{th-MHE} is developed under the periodic mini-batch employment described in Assumption~\ref{assumption-coPE}. However, a similar analysis can be extended to quasi-periodic mini-batch employment, since only finitely many distinct products of projection matrices (compare $Q$ in the proof of Theorem~\ref{th-MHE}) arise, which admits a uniform constant $\rho\in[0,1)$.  
\end{remark}

\section{Numerical results}\label{section-simulation}
In this section, we first validate our theoretical results using synthetic data, and then demonstrate the effectiveness of the MHE-based method~(\ref{MHE problem}) on a regression benchmark dataset from UCI \cite{asuncion2007uci}. All experiments were performed on a laptop  running Windows 11 Pro (64-bit) with a 13th Gen Intel Core i7-13580HX processor and 16 GB of RAM, using MATLAB R2024a. The MHE optimization problem was solved using CasADi \cite{andersson2019casadi}.
 
\subsection{A synthetic data case}
In this experiment, we consider an FNN with $2$ inputs, $10$ hidden nodes, and $1$ output. Every hidden node uses a ReLU activation function, and the weights from hidden layer to the output are fixed to $1$. We generate a teacher FNN with random, locally observable weights, which serve as the ideal weights (i.e., the ideal state of the reformulated dynamical system~(\ref{2-eq-FNN problem})). A student FNN with identical architecture is then trained to learn the weights of the teacher FNN. 

To demonstrate the convergence of the state estimation error, we generate a persistently exciting input sequence for the ideal state (teacher weights) and collect the corresponding outputs with bounded measurement noise ($\max_{i\in\mathbb{Z}_{[1,N]}}\abs{\epsilon_{i}}=1\times 10^{-4}$), forming a training dataset $\mathcal{D}_{tr}$ of $90$ input-output pairs. This dataset is partitioned into $5$ mini-batch datasets, each containing $N_1=18$ input-output pairs. We select an initial locally observable state $w_0$ sufficiently close  to the ideal state so that the ideal state lies in the locally observable neighborhood $\mathbb{W}_0$ of $w_0$, constructed according to Theorem~\ref{th-observable set}. Consequently, Assumptions~\ref{assumption-coPE} and~\ref{assum-true state}  hold. 

We implement the MHE-based training method~(\ref{MHE problem}) on the dataset $\mathcal{D}_{tr}$ to estimate the ideal state $w$. At each step, the state estimate is updated by solving the optimization problem for a single mini-batch. Processing all mini-batches sequentially constitutes one training  epoch. The experiment runs for $2000$ epochs, during which each mini-batch is used periodically. This schedule ensures that, in the limit of infinite epochs, every mini-batch appears periodically, guaranteeing convergence of the state estimate to a neighborhood of the ideal state by Theorem~\ref{th-MHE}. A test dataset consisting of $90$ input-output pairs is generated randomly.  Fig.~\ref{fig:loss} and Fig.~\ref{fig:ee} compare the performance of the proposed method against a standard mini-batch GD method. For a fair comparison, the mini-batch GD uses the same batch size $N_1$ as the MHE-based method,  with a learning rate of $0.1$.  Although the average time consumption per epoch is higher for the MHE-based method, its state estimation error converges faster, reaching $5\times 10^{-3}$ within one epoch, as shown in Table~\ref{table time consumption}. In summary, the proposed MHE-based FNN training method achieves strong convergence in weight estimation error, training loss, and test loss, most importantly, with theoretical guarantees on the weight convergence. 
\begin{table}[!t]
		\caption{Comparison of time consumption between MHE-based training method and the mini-batch GD method.}
	\centering
	\begin{tabular*}{\columnwidth}{@{\extracolsep{\fill}}lccc@{}}
		\toprule[1pt]  
		Method & Time to estimation err.=5$\times 10^{-3}$ & Avg. time/epoch \\
		\midrule        
		MHE-based training & 1.5 s & 1.5 s \\
		mini-batch GD & 3.3 s & 1.6 ms \\
		\bottomrule[1pt] 
	\end{tabular*}
	\label{table time consumption}
\end{table}
\begin{figure}[!t]
	\centering
	\includegraphics[width=3.2in]{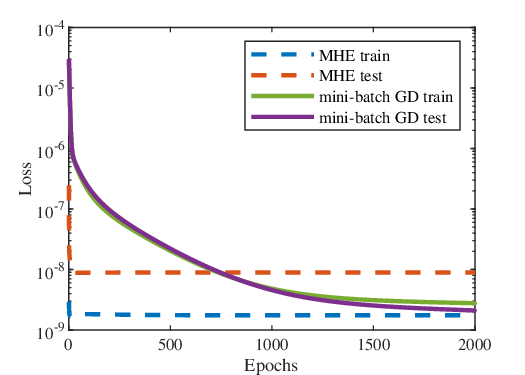}
	\caption{Comparison of loss between MHE-based training method and the mini-batch GD method.}
	\label{fig:loss}
\end{figure}
\begin{figure}[!t]
	\centering
	\includegraphics[width=3.2in]{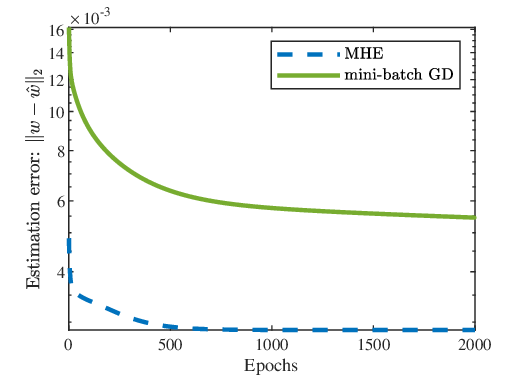}
	\caption{The error between the ideal weights and the estimate weights.}
	\label{fig:ee}
\end{figure}
\subsection{A UCI dataset case}
To further demonstrate the potential of the MHE-based training method, we evaluate it on the UCI Wine benchmark dataset \cite{asuncion2007uci}. We use a fully connected two-layer FNN with 32 hidden nodes for both experiments. ReLU activation functions are used in the hidden layers, while the output layer employs a linear activation function.  As established in Section~\ref{section-multi FNN}, local observability is not guaranteed for such FNN configurations. However, we show that with minor modifications, the proposed MHE-based training scheme achieves competitive performance compared with existing methods. Datasets are divided into training (90\%) and test (10\%) subsets.  We therefore adopt the following modified cost function for our proposed method~(\ref{MHE-c-cost}): 
\begin{align}
	\min_{\bar{w}_{t}} V_t(\bar{w}_t)=\sum_{j\in \mathbb{Z}_t}10^{-4}\norm{\bar{y}_{j}-\tilde{y}_{j}}_2^2+\norm{\bar{w}_t-\hat{w}_{t-1}}_2^2
\end{align} with a horizon length of $32$, and remove the observability-related constraints~(\ref{MHE-c-unobservable}) and~(\ref{MHE-d}). We compare against standard mini-batch training using the Adam optimizer \cite{kingma2014adam} with a learning rate of $0.001$, a batch size of $32$,  and $150$ epochs.  Each experiment is run $30$ times, and the mean test error and standard deviation are presented in Table~\ref{table benchmark}. We also include recent results from~\cite{zhang2018noisy} on the same dataset using noisy
natural gradient descent with a matrix-variate Gaussian posterior (NNG-MVG). The results show that  the proposed MHE-based training method achieves a smaller RMSE. This simulation motivates us to further explore convergence guarantees for MHE-based training of general FNNs under weaker observability conditions.
\begin{table}[!t]
	\caption{Comparison of test RMSE on UCI Wine dataset.}
	\centering
	\begin{tabular*}{\columnwidth}{@{\extracolsep{\fill}}ccc@{}}
			\toprule[1pt]  
			 Adam optimizer & NNG-MVG&MHE-based training \\
			\midrule        
			  0.639$\pm$0.010 & 0.637$\pm$0.011 &\textbf{0.628$\pm$0.015}\\
			\bottomrule[1pt] 
		\end{tabular*}
		\label{table benchmark}
	\end{table}

\section{Conclusion}\label{section-conclustion}
In this paper, we presented a theoretical convergence analysis for FNN training using a novel MHE-based approach. The analysis employs tools from systems and control theory by first reformulating FNNs as dynamical systems, with the weights as states.  We investigated local observability of general FNN architectures with ReLU activations to establish convergence and stability guarantees for the MHE-based training approach. For a specific class of two-layer FNNs, we derived sufficient conditions under which local observability holds, enabling rigorous MHE convergence analysis. Furthermore, we developed a PE input design method for a locally observable state, and constructed its locally observable neighborhood, within which all states  are mutually distinguishable under the same PE inputs.

The proposed MHE-based FNN training method updates state estimates using mini-batch training datasets. These batches are chosen to be collectively PE, rendering the feasible candidate states locally observable, even when individual batches are not PE. This is achieved by fixing the state projection onto the unobservable subspace while  updating only the projection onto the observable subspace. The method proposed here is endowed with formal convergence guarantees under local observability conditions. The effectiveness of the proposed method was illustrated via a synthetic example and a benchmark example. 

An important challenge for  future research is to derive convergence guarantees for  FNN weights even when local observability cannot be ensured.

\bibliographystyle{IEEEtran}
\bibliography{reference}

\begin{thebibliography}{10}
\providecommand{\url}[1]{#1}
\csname url@samestyle\endcsname
\providecommand{\newblock}{\relax}
\providecommand{\bibinfo}[2]{#2}
\providecommand{\BIBentrySTDinterwordspacing}{\spaceskip=0pt\relax}
\providecommand{\BIBentryALTinterwordstretchfactor}{4}
\providecommand{\BIBentryALTinterwordspacing}{\spaceskip=\fontdimen2\font plus
\BIBentryALTinterwordstretchfactor\fontdimen3\font minus
  \fontdimen4\font\relax}
\providecommand{\BIBforeignlanguage}[2]{{%
\expandafter\ifx\csname l@#1\endcsname\relax
\typeout{** WARNING: IEEEtran.bst: No hyphenation pattern has been}%
\typeout{** loaded for the language `#1'. Using the pattern for}%
\typeout{** the default language instead.}%
\else
\language=\csname l@#1\endcsname
\fi
#2}}
\providecommand{\BIBdecl}{\relax}
\BIBdecl

\bibitem{bonassi2022recurrent}
F.~Bonassi, M.~Farina, J.~Xie, and R.~Scattolini, ``On recurrent neural
  networks for learning-based control: recent results and ideas for future
  developments,'' \emph{Journal of Process Control}, vol. 114, pp. 92--104,
  2022.

\bibitem{hunt1992neural}
K.~J. Hunt, D.~Sbarbaro, R.~{\.Z}bikowski, and P.~J. Gawthrop, ``Neural
  networks for control systems—a survey,'' \emph{Automatica}, vol.~28, no.~6,
  pp. 1083--1112, 1992.

\bibitem{krizhevsky2017imagenet}
A.~Krizhevsky, I.~Sutskever, and G.~E. Hinton, ``Imagenet classification with
  deep convolutional neural networks,'' \emph{Communications of the ACM},
  vol.~60, no.~6, pp. 84--90, 2017.

\bibitem{vaswani2017attention}
A.~Vaswani, N.~Shazeer, N.~Parmar, J.~Uszkoreit, L.~Jones, A.~N. Gomez,
  {\L}.~Kaiser, and I.~Polosukhin, ``Attention is all you need,''
  \emph{Advances in neural information processing systems}, vol.~30, 2017.

\bibitem{11159581}
H.~Hose, J.~Köhler, M.~N. Zeilinger, and S.~Trimpe, ``Approximate nonlinear
  model predictive control with safety-augmented neural networks,'' \emph{IEEE
  Transactions on Control Systems Technology}, vol.~33, no.~6, pp. 2490--2497,
  2025.

\bibitem{9993046}
D.~Tabas and B.~Zhang, ``Safe and efficient model predictive control using
  neural networks: An interior point approach,'' in \emph{2022 IEEE 61st
  Conference on Decision and Control (CDC)}, 2022, pp. 1142--1147.

\bibitem{li2022using}
Y.~Li, K.~Hua, and Y.~Cao, ``Using stochastic programming to train neural
  network approximation of nonlinear {MPC} laws,'' \emph{Automatica}, vol. 146,
  p. 110665, 2022.

\bibitem{schmidhuber2015deep}
J.~Schmidhuber, ``Deep learning in neural networks: an overview,'' \emph{Neural
  Networks}, vol.~61, pp. 85--117, 2015.

\bibitem{li2017convergence}
Y.~Li and Y.~Yuan, ``Convergence analysis of two-layer neural networks with
  {R}e{LU} activation,'' in \emph{Proceedings of the 31st International
  Conference on Neural Information Processing Systems}, 2017, pp. 597--607.

\bibitem{singhal1988training}
S.~Singhal and L.~Wu, ``Training multilayer perceptrons with the extended
  {K}alman algorithm,'' in \emph{Proceedings of the 2nd International
  Conference on Neural Information Processing Systems}, 1988, p. 133–140.

\bibitem{bemporad2022recurrent}
A.~Bemporad, ``Recurrent neural network training with convex loss and
  regularization functions by extended {K}alman filtering,'' \emph{IEEE
  Transactions on Automatic Control}, vol.~68, no.~9, pp. 5661--5668, 2022.

\bibitem{schiller2023lyapunov}
J.~D. Schiller, S.~Muntwiler, J.~K{\"o}hler, M.~N. Zeilinger, and M.~A.
  M{\"u}ller, ``A {L}yapunov function for robust stability of moving horizon
  estimation,'' \emph{IEEE Transactions on Automatic Control}, vol.~68, no.~12,
  pp. 7466--7481, 2023.

\bibitem{bonassi2022towards}
F.~Bonassi, J.~Xie, M.~Farina, and R.~Scattolini, ``Towards lifelong learning
  of recurrent neural networks for control design,'' in \emph{2022 European
  control conference (ECC)}.\hskip 1em plus 0.5em minus 0.4em\relax IEEE, 2022,
  pp. 2018--2023.

\bibitem{kleinberg2018alternative}
B.~Kleinberg, Y.~Li, and Y.~Yuan, ``An alternative view: when does {SGD} escape
  local minima?'' in \emph{International Conference on Machine Learning}.\hskip
  1em plus 0.5em minus 0.4em\relax PMLR, 2018, pp. 2698--2707.

\bibitem{albertini1993uniqueness}
F.~Albertini, E.~D. Sontag, and V.~Maillot, ``Uniqueness of weights for neural
  networks,'' in \emph{Artificial Neural Networks for Speech and Vision}, 1993.

\bibitem{bona2023parameter}
J.~Bona-Pellissier, F.~Bachoc, and F.~Malgouyres, ``Parameter identifiability
  of a deep feedforward {ReLU} neural network,'' \emph{Machine Learning}, vol.
  112, no.~11, pp. 4431--4493, 2023.

\bibitem{hermann1977nonlinear}
R.~Hermann and A.~Krener, ``Nonlinear controllability and observability,''
  \emph{IEEE Transactions on Automatic Control}, vol.~22, no.~5, pp. 728--740,
  1977.

\bibitem{nijmeijer1982observability}
H.~Nijmeijer, ``Observability of autonomous discrete time non-linear systems: a
  geometric approach,'' \emph{International Journal of Control}, vol.~36,
  no.~5, pp. 867--874, 1982.

\bibitem{albertini1996remarks}
F.~Albertini and D.~D’Alessandro, ``Remarks on the observability of nonlinear
  discrete time systems,'' in \emph{System Modelling and Optimization:
  Proceedings of the Seventeenth IFIP TC7 Conference on System Modelling and
  Optimization, 1995}.\hskip 1em plus 0.5em minus 0.4em\relax Springer, 1996,
  pp. 155--162.

\bibitem{sontag1984concept}
E.~D. Sontag, ``A concept of local observability,'' \emph{Systems \& Control
  Letters}, vol.~5, no.~1, pp. 41--47, 1984.

\bibitem{5400067}
A.~J. Krener and K.~Ide, ``Measures of unobservability,'' in \emph{Proceedings
  of the 48h IEEE Conference on Decision and Control (CDC) held jointly with
  2009 28th Chinese Control Conference}, 2009, pp. 6401--6406.

\bibitem{7403218}
N.~D. Powel and K.~A. Morgansen, ``Empirical observability {G}ramian rank
  condition for weak observability of nonlinear systems with control,'' in
  \emph{2015 54th IEEE Conference on Decision and Control (CDC)}, 2015, pp.
  6342--6348.

\bibitem{vanelli2025local}
M.~Vanelli and J.~M. Hendrickx, ``Local identifiability of fully-connected
  feed-forward networks with nonlinear node dynamics,'' in \emph{2025 European
  Control Conference (ECC)}.\hskip 1em plus 0.5em minus 0.4em\relax IEEE, 2025,
  pp. 825--830.

\bibitem{yang2025local}
Y.~Yang, V.~G. Lopez, and M.~A. M{\"u}ller, ``Local observability of a class of
  feedforward neural networks,'' in \emph{2025 IEEE 64th Conference on Decision
  and Control (CDC)}.\hskip 1em plus 0.5em minus 0.4em\relax IEEE, 2025, pp.
  90--95.

\bibitem{aichmayr2024sagemath}
M.~S. Aichmayr, S.~M{\"u}ller, and G.~Regensburger, ``A sagemath package for
  elementary and sign vectors with applications to chemical reaction
  networks,'' in \emph{International Congress on Mathematical Software}.\hskip
  1em plus 0.5em minus 0.4em\relax Springer, 2024, pp. 155--164.

\bibitem{rockafellar1969elementary}
R.~T. Rockafellar, ``The elementary vectors of a subspace of ${R}^n$,'' in
  \emph{Combinatorial Mathematics and Its Applications}.\hskip 1em plus 0.5em
  minus 0.4em\relax University of North Carolina Press, 1969, pp. 104--127.

\bibitem{brualdi1995sparse}
R.~A. Brualdi, S.~Friedland, and A.~Pothen, ``The sparse basis problem and
  multilinear algebra,'' \emph{SIAM Journal on Matrix Analysis and
  Applications}, vol.~16, no.~1, pp. 1--20, 1995.

\bibitem{cybenko1989approximation}
G.~Cybenko, ``Approximation by superpositions of a sigmoidal function,''
  \emph{Mathematics of Control, Signals and Systems}, vol.~2, no.~4, pp.
  303--314, 1989.

\bibitem{greville1960some}
T.~Greville, ``Some applications of the pseudoinverse of a matrix,'' \emph{SIAM
  Review}, vol.~2, no.~1, pp. 15--22, 1960.

\bibitem{abraham2012manifolds}
R.~Abraham, J.~E. Marsden, and T.~Ratiu, \emph{Manifolds, tensor analysis, and
  applications}.\hskip 1em plus 0.5em minus 0.4em\relax Springer Science \&
  Business Media, 2012, vol.~75.

\bibitem{sakai1995strong}
M.~Sakai, ``Strong convergence of infinite products of orthogonal projections
  in {H}ilbert space,'' \emph{Applicable Analysis}, vol.~59, no. 1-4, pp.
  109--120, 1995.

\bibitem{horn2012matrix}
R.~A. Horn and C.~R. Johnson, \emph{Matrix analysis}.\hskip 1em plus 0.5em
  minus 0.4em\relax Cambridge university press, 2012.

\bibitem{asuncion2007uci}
A.~Asuncion, D.~Newman \emph{et~al.}, ``{UCI} machine learning repository,''
  2007.

\bibitem{andersson2019casadi}
J.~A. Andersson, J.~Gillis, G.~Horn, J.~B. Rawlings, and M.~Diehl, ``Cas{AD}i:
  a software framework for nonlinear optimization and optimal control,''
  \emph{Mathematical Programming Computation}, vol.~11, no.~1, pp. 1--36, 2019.

\bibitem{kingma2014adam}
D.~P. Kingma and J.~Ba, ``Adam: A method for stochastic optimization,''
  \emph{arXiv preprint arXiv:1412.6980}, 2014.

\bibitem{zhang2018noisy}
G.~Zhang, S.~Sun, D.~Duvenaud, and R.~Grosse, ``Noisy natural gradient as
  variational inference,'' in \emph{International conference on machine
  learning}.\hskip 1em plus 0.5em minus 0.4em\relax PMLR, 2018, pp. 5852--5861.

\end{thebibliography}
\begin{IEEEbiography}[{\includegraphics[width=1in,height=1.25in,clip,keepaspectratio]{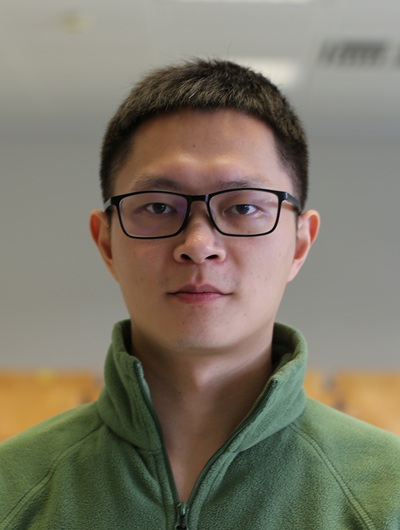}}]{Yi Yang}  received both the B.Eng. and the M.Sc. degrees in control science and engineering from Beijing Institute of Technology, China, in 2021 and 2024, respectively. He is currently working towards the PhD degree with Institute of Automatic Control, Leibniz University Hannover, Germany. 
	
	His research interests include moving horizon estimation and active learning of nonlinear systems.
\end{IEEEbiography}
\begin{IEEEbiography}[{\includegraphics[width=1in,height=1.25in,clip,keepaspectratio]{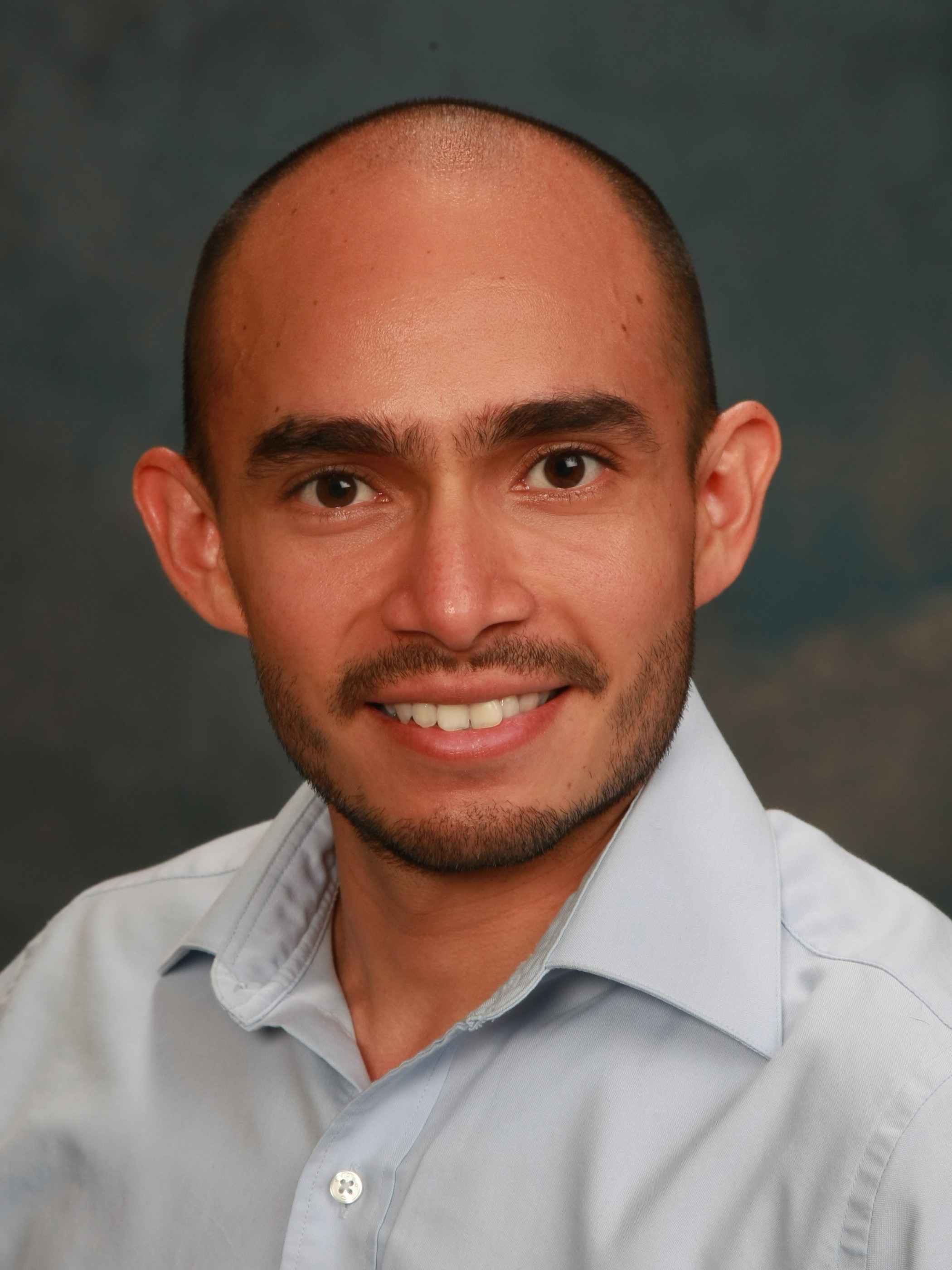}}]{Victor G. Lopez}  received a B.Sc. degree in communications and electronics engineering from the Universidad Autonoma de Campeche, Mexico, in 2010, the M.Sc. degree in electrical engineering from the Research and Advanced Studies Center (Cinvestav), Mexico, in 2013, and the PhD degree in electrical engineering from the University of Texas at Arlington, USA, in 2019. He is currently a postdoctoral researcher at the Institute of Automatic Control, Leibniz University Hannover, Germany. His research interest include data- and learning-based control, game theory, and distributed control. 
\end{IEEEbiography}
\begin{IEEEbiography}[{\includegraphics[width=1in,height=1.25in,clip,keepaspectratio]{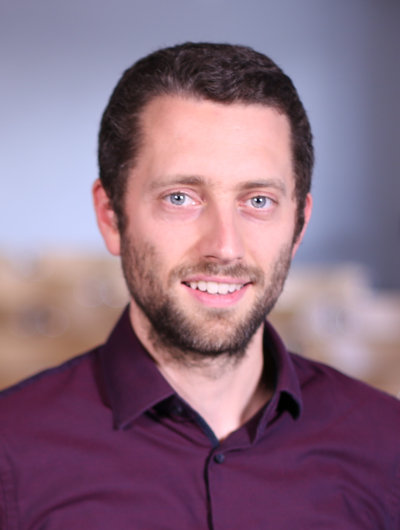}}]{Matthias A. M\"{u}ller}  received a Diploma degree in engineering cybernetics from the University of Stuttgart, Germany, an M.Sc. in electrical and computer engineering from the University of Illinois at Urbana-Champaign (both in 2009), and a Ph.D. in mechanical engineering from the University of Stuttgart in 2014. Since 2019, he is Director of the Institute of Automatic Control and Full Professor at the Leibniz University Hannover, Germany.
	
His research interests include nonlinear control and estimation, model predictive control, and data- and learning-based control, with application in different fields including biomedical engineering and robotics. He has received various distinctions for his work, including the European Systems \& Control PhD Thesis Award, an ERC Starting Grant 	from the European Research Council, the IEEE CSS George S. Axelby Outstanding Paper Award, the Brockett-Willems Outstanding Paper Award, and the Journal of Process Control Paper Award. He serves/d as an associate editor for Automatica and as an editor of the International Journal of Robust and Nonlinear Control. 
\end{IEEEbiography}

\end{document}